\newtheorem{lemma}{Lemma}
\newtheorem{theorem}{Theorem}
\newtheorem{corollary}{Corollary}
\theoremstyle{definition}
\newtheorem{definition}{Definition}
\DeclareMathOperator{\tr}{tr}
\DeclareMathOperator*{\E}{\mathbb E}
\DeclareMathOperator{\diag}{diag}
\begin{document}

\title{Extensive entropy from unitary evolution}

\begin{CJK*}{UTF8}{}

\CJKfamily{gbsn}

\author{Yichen Huang (黄溢辰)\thanks{yichuang@mit.edu}}
\affil{Center for Theoretical Physics, Massachusetts Institute of Technology, Cambridge, Massachusetts 02139, USA\thanks{Work from home in Commerce Township, Michigan 48390, USA, where winter is cold and snowy.}}

\maketitle

\end{CJK*}

\begin{abstract}

In quantum many-body systems, a Hamiltonian is called an ``extensive entropy generator'' if starting from a random product state the entanglement entropy obeys a volume law at long times with overwhelming probability. We prove that (i) any Hamiltonian whose spectrum has non-degenerate gaps is an extensive entropy generator; (ii) in the space of (geometrically) local Hamiltonians, the non-degenerate gap condition is satisfied almost everywhere. Specializing to many-body localized systems, these results imply the observation stated in the title of Bardarson et al. [PRL 109, 017202 (2012)].\footnote{I first met my doctoral advisor Joel E. Moore in December 2011, when he was working with his group members on the project \cite{BPM12}. More than nine years later, I was very fortunate to find a proof of the observation stated in the title of Ref. \cite{BPM12}. This reminds me of the sunny summer and rainy (not snowy) winter in Berkeley. I am very grateful to Joel for his guidance and support since we first met.}

\end{abstract}

\section{Introduction}

Entropy is a fundamental concept in thermodynamics and statistical mechanics. In textbooks and introductory courses, we learned that entropy is an extensive quantity:
\begin{quote}
Initializing a system in a low entropy state, generically (although not always) the entropy will grow with time and eventually become proportional to the system size.
\end{quote}
The main contribution of this paper is to provide a mathematical characterization of this statement in quantum many-body systems. Our rigorous results not only show how extensive entropy emerges from the unitary evolution under a (geometrically) local Hamiltonian, but also demonstrate the genericness of extensive entropy.

Suppose an isolated quantum spin system is initialized in a pure state. Under unitary evolution the system stays in a pure state and hence its entropy is always zero. To observe non-trivial entropy dynamics, we divide the system into two parts $A$ and $B$. Assume without loss of generality that subsystem $A$ is smaller than or equal to half the system size. We view $B$ as a bath of $A$ and consider the entropy of $A$. This entropy is called entanglement entropy, and extensive subsystem entropy is also known as a volume law for entanglement. The most general unentangled state with zero entropy for all subsystems is a random product state, where each spin is chosen independently and uniformly at random on the Bloch sphere.

Our main result is the following. A Hamiltonian is called an ``extensive entropy generator'' if starting from a random product state the entanglement entropy obeys a volume law at long times with overwhelming probability. We prove that (i) any Hamiltonian whose spectrum has non-degenerate gaps is an extensive entropy generator; (ii) in the space of Hamiltonians with short-range interactions, the non-degenerate gap condition is satisfied almost everywhere.

As a byproduct, we solve a mathematical problem in many-body localization. It is well known that in one-dimensional Anderson localized systems, starting from a random product state the entanglement entropy remains bounded at all times \cite{ANSS16}. However, upon adding a generic local perturbation the system becomes many-body localized (MBL), and unbounded growth of entanglement was observed numerically \cite{ZPP08, BPM12, NKH14} and then explained heuristically \cite{VA13, SPA13U, HNO14}. We prove this observation.

\section{Preliminaries}

Throughout this paper, standard asymptotic notations are used extensively. Let $f,g:\mathbb R^+\to\mathbb R^+$ be two functions. One writes $f(x)=O(g(x))$ if and only if there exist constants $M,x_0>0$ such that $f(x)\le Mg(x)$ for all $x>x_0$; $f(x)=\Omega(g(x))$ if and only if there exist constants $M,x_0>0$ such that $f(x)\ge Mg(x)$ for all $x>x_0$.

\begin{definition} [entanglement entropy] \label{def:ee}
The entanglement entropy of a bipartite pure state $\rho_{AB}$ is defined as the von Neumann entropy
\begin{equation}
S(\rho_A):=-\tr(\rho_A\ln\rho_A)
\end{equation}
of the reduced density matrix $\rho_A=\tr_B\rho_{AB}$.
\end{definition}

Consider a system of $N$ spins or qu\emph{d}its with local dimension $d$ so that the dimension of the total Hilbert space is $d^N$.

\begin{definition} [Haar-random product state] \label{def:haar}
Let $|\Psi\rangle=\bigotimes_{j=1}^N|\Psi_j\rangle$ be a Haar-random product state, where each $|\Psi_j\rangle$ is chosen independently and uniformly at random with respect to the Haar measure.
\end{definition}

\begin{definition} [non-degenerate spectrum] \label{def:nds}
The spectrum of a Hamiltonian is non-degenerate if all eigenvalues are distinct.
\end{definition}

\begin{definition} [non-degenerate gaps] \label{def:ndg}
The spectrum $\{E_j\}$ of a Hamiltonian has non-degenerate gaps if the differences $\{E_j-E_k\}_{j\neq k}$ are all distinct, i.e., for any $j\neq k$,
\begin{equation} \label{eq:ndg}
E_j-E_k=E_{j'}-E_{k'}\implies(j=j')~{\rm and}~(k=k').
\end{equation}
\end{definition}

By definition, the non-degenerate gap condition implies that the spectrum is non-degenerate.

\section{Results}

Let $m,n$ be positive integers such that $mn$ is a multiple of the system size $N$. Let $A_1,A_2,\ldots,A_m$ be $m$ possibly overlapping subsystems, each of which consists of exactly $n\le N/2$ spins. Suppose that each spin in the system is in exactly $mn/N$ out of these $m$ subsystems. For each $j$, let $\bar A_j$ be the complement of $A_j$ so that $A_j\otimes\bar A_j$ defines a bipartition of the system.

\begin{theorem} \label{thm:low}
Initialize the system in a Haar-random product state $|\Psi\rangle$ (Definition \ref{def:haar}). Let
\begin{equation}
\rho_{A_j}(t):=\tr_{\bar A_j}\rho(t),\quad\rho(t):=e^{-iHt}|\Psi\rangle\langle\Psi|e^{iHt}
\end{equation}
be the reduced density matrix of subsystem $A_j$ at time $t$. We consider the scenario that
\begin{equation} \label{eq:upto}
n\le\frac{(1-\epsilon)N\ln\frac{d+1}{2}}{2\ln d},
\end{equation}
where $\epsilon>0$ is an arbitrarily small constant. For any Hamiltonian $H$ whose spectrum has non-degenerate gaps (Definition \ref{def:ndg}),
\begin{equation} \label{eq:vol}
    \Pr_\Psi\left(\Pr_{t\in\mathbb R}\left(\frac{1}{m}\sum_{j=1}^mS\big(\rho_{A_j}(t)\big)\ge(1-\epsilon)n\sum_{k=2}^d\frac1k\right)=1-e^{-\Omega(N)}\right)=1-e^{-\Omega(N)}.
\end{equation}
\end{theorem}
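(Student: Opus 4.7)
The plan is to split the problem into two estimates: (i) an \emph{equilibration} statement, reducing the typical entropy at time $t$ to the entropy of the diagonal-ensemble reduced state $\omega_{A_j}$; and (ii) a \emph{lower bound} $S(\omega_{A_j})\ge(1-\epsilon)n(H_d-1)$ (where $H_d:=\sum_{k=1}^d1/k$) that exploits only the random-product structure of $|\Psi\rangle$.

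For (i), the non-degenerate gap condition (Definition \ref{def:ndg}) kills every phase $e^{-i(E_J+E_K-E_{J'}-E_{K'})t}$ under time averaging on two copies unless $\{J,K\}=\{J',K'\}$. Setting $c_J=\langle E_J|\Psi\rangle$, $\omega=\sum_J|c_J|^2|E_J\rangle\langle E_J|$, and $\sigma^A_{JK}=\tr_{\bar A}|E_J\rangle\langle E_K|$, this yields the exact identity
\begin{equation*}
\overline{\tr\bigl(\rho_A(t)-\omega_A\bigr)^2}=\sum_{J\neq K}|c_J|^2|c_K|^2\,\|\sigma^A_{JK}\|_F^2.
\end{equation*}
Averaging over the Haar-random product state via the two-copy identity $\mathbb{E}|\Psi\rangle\langle\Psi|^{\otimes 2}=(d(d+1))^{-N}\bigotimes_i(I_i+S_i)$ (with $S_i$ the SWAP on two copies of site $i$), a Cauchy--Schwarz / Linden--Popescu--Short--Winter argument gives $\mathbb{E}_\Psi\overline{\|\rho_A(t)-\omega_A\|_1^2}\le d_A^2\bigl(2/(d+1)\bigr)^N$. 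The cutoff \eqref{eq:upto} is exactly the regime $d_A^2=d^{2n}\ll((d+1)/2)^N$; applying Markov's inequality twice---first in $\Psi$, then in $t$---gives $\|\rho_A(t)-\omega_A\|_1=e^{-\Omega(N)}$ for typical $(t,\Psi)$, and Fannes' inequality then delivers $S(\rho_A(t))\ge S(\omega_A)-o(1)$.

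For (ii), the key computation is that when the eigenbasis is a product basis, $\omega_A$ is diagonal with entries equal to the $A$-marginal of the product distribution $\prod_i|\langle\alpha_i|\psi_i\rangle|^2$, and its Shannon entropy is $\sum_{i\in A}H(q_i)$---each single-site Shannon entropy having Haar expectation exactly $H_d-1$ by a standard Dirichlet calculation. Product eigenbases can only arise from non-interacting $H=\sum_ih_i$, and any such Hamiltonian has degenerate gaps (two transitions on site 1 between the same pair of levels share a gap regardless of the other sites), so they are forbidden by Definition \ref{def:ndg}. I would then argue that the product case actually \emph{minimizes} $\mathbb{E}_\Psi S(\omega_A)$ over all orthonormal eigenbases, so the same bound $\mathbb{E}_\Psi S(\omega_A)\ge n(H_d-1)$ persists under the non-degenerate gap hypothesis. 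Levy-type concentration on the Haar product manifold, applied tensor factor by factor, upgrades the expectation to a high-probability statement, and averaging over the $m$ subsystems $A_j$ (each spin lying in $mn/N$ of them) suppresses the variance enough to reach failure probability $e^{-\Omega(N)}$.

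The principal obstacle is the minimization claim in (ii). The R\'enyi-2 route $S(\omega_A)\ge-\ln\tr\omega_A^2$ yields only $n\ln\tfrac{d+1}{2}$, which is strictly smaller than $n(H_d-1)$, so a von-Neumann-specific input is needed. I would try either a direct convexity argument---any unitary rotation away from a product eigenbasis mixes the marginal distribution in a majorization-increasing way, hence increases Shannon entropy---or a replica computation of $\mathbb{E}_\Psi\tr\omega_A^q$ analytically continued to $q\to1^+$. Either route must extract from the non-degenerate gap condition a quantitative ``no $A$-local eigenstate coincidence'' statement, which rules out the trivial non-interacting regime and keeps the $A$-marginal of the diagonal ensemble at least as spread as the product-basis value.
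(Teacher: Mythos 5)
Part (i) of your plan is essentially the paper's: non-degenerate gaps give the Linden--Popescu--Short--Winter equilibration bound $\overline{\|\rho_{A_j}(t)-\sigma_{A_j}\|_1}\le d^n/\sqrt{D^{\rm eff}_\Psi}$, the effective dimension is controlled by the second moment of $|\langle E_J|\Psi\rangle|^2$ over the product Haar measure (giving exactly the $((d+1)/2)^N$ scale and hence the cutoff (\ref{eq:upto})), and Markov plus Fannes--Audenaert finishes. That half is sound.

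The genuine gap is in part (ii), and it is exactly where you flag it: the claim that a product eigenbasis minimizes $\E_\Psi S(\omega_A)$ over all orthonormal eigenbases is unproven, and neither of your two proposed routes is carried out. Worse, you are attacking a harder statement than needed: the theorem only concerns the \emph{average} $\frac1m\sum_j S(\rho_{A_j}(t))$ over a family of subsystems that covers every spin equally often, and the paper exploits this by applying strong subadditivity to get $\frac1m\sum_j S(\sigma_{A_j})\ge\frac{n}{N}S(\sigma)$, where $\sigma=\sum_J|c_J|^2|E_J\rangle\langle E_J|$ is the \emph{global} diagonal ensemble. This reduces everything to lower-bounding the Shannon entropy of the single probability vector $(|c_J|^2)_J$, with no basis-minimization claim required. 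The von-Neumann-specific input you are looking for is then obtained by monotonicity of R\'enyi entropies, $S(\sigma)\ge\frac{1}{1-\alpha}\ln\sum_J|c_J|^{2\alpha}$ for $\alpha>1$, combined with the fractional moment bound $\E_\Psi\sum_J|\langle E_J|\Psi\rangle|^{2\alpha}\le(M_\alpha d)^N$ with $M_\alpha=\prod_{j=1}^{d-1}\frac{j}{j+\alpha}$, valid for \emph{any} orthonormal basis (proved by peeling off one site at a time and using convexity of $x\mapsto x^\alpha$). Markov's inequality turns this into an $e^{-\Omega(N)}$-probability statement directly, so no Levy-type concentration on the product manifold is needed either. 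The coefficient $\sum_{k=2}^d\frac1k$ emerges in the limit $\alpha\to1^+$ from $-\frac{1}{\alpha-1}\ln(M_\alpha d)=\frac{1}{\alpha-1}\sum_{k=2}^d\ln\bigl(1+\frac{\alpha-1}{k}\bigr)\to\sum_{k=2}^d\frac1k$. This is, in effect, the rigorous version of your ``replica $q\to1^+$'' idea, but applied to the global diagonal ensemble rather than to $\omega_A$; applied to the global object it requires no analytic continuation and no claim about which eigenbasis is extremal. As written, your proposal does not close, because the majorization argument you sketch (``any rotation away from a product eigenbasis increases the marginal entropy'') is not established and it is not clear how the non-degenerate gap condition would supply the quantitative input you would need for it.
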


There is no underlying lattice structure in the statement of Theorem \ref{thm:low}. From now on we focus on quantum lattice systems, which are of particular interest. Without loss of generality, we consider a chain of $N$ spins. It is straightforward to extend Corollary \ref{cor:large} below to higher spatial dimensions.

Let $A$ be a contiguous region of $n$ spins, and $\bar A$ be the rest of the system. Let $\E_{|A|=n}$ denote averaging over all contiguous subsystems of size $n$. With periodic boundary conditions, there are $N$ such subsystems.

\begin{corollary} \label{cor:large}
Initialize the system in a Haar-random product state $|\Psi\rangle$. Let
\begin{equation}
\rho_A(t):=\tr_{\bar A}\rho(t),\quad\rho(t):=e^{-iHt}|\Psi\rangle\langle\Psi|e^{iHt}
\end{equation}
be the reduced density matrix of subsystem $A$ at time $t$. We consider the scenario that
\begin{equation}
\frac{(1-\epsilon)N\ln\frac{d+1}{2}}{2\ln d}<n\le N/2,
\end{equation}
where $\epsilon>0$ is an arbitrarily small constant. For any (not necessarily local) Hamiltonian $H$ whose spectrum has non-degenerate gaps,
\begin{equation}
    \Pr_\Psi\left(\Pr_{t\in\mathbb R}\left(\E_{|A|=n}S\big(\rho_A(t)\big)\ge\left(\frac{1}{2}-\epsilon\right)\frac{N\ln\frac{d+1}{2}}{\ln d}\sum_{k=2}^d\frac1k\right)=1-e^{-\Omega(N)}\right)=1-e^{-\Omega(N)}.
\end{equation}
\end{corollary}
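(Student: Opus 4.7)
The guiding observation is that the right-hand side of Corollary~\ref{cor:large} coincides, up to absorbing $O(\epsilon^2)$ into $\epsilon$, with the bound $(1-\epsilon)\tilde n\sum_{k=2}^d 1/k$ delivered by Theorem~\ref{thm:low} at the endpoint $\tilde n:=\lfloor(1-\epsilon)N\ln\frac{d+1}{2}/(2\ln d)\rfloor$ of its range of validity. The task is therefore to carry the endpoint estimate up to all contiguous subsystems of size $n\in(\tilde n,N/2]$.

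First I would apply Theorem~\ref{thm:low} with $n=\tilde n$ and $m=N$, taking $A_1,\ldots,A_N$ to be the $N$ contiguous $\tilde n$-subsystems of the chain (translates under the periodic boundary). Each spin lies in exactly $\tilde n=mn/N$ of them, so the hypothesis is met and the theorem gives, under its double $1-e^{-\Omega(N)}$ guarantee,
\[
\E_{|A_0|=\tilde n}S(\rho_{A_0}(t))\ \ge\ (1-\epsilon)\,\tilde n\sum_{k=2}^d\frac{1}{k}.
\]

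To extend from $\tilde n$ to $n\in(\tilde n,N/2]$, I would reopen the proof of Theorem~\ref{thm:low} and extract the underlying upper bound on the averaged purity
\[
\bar P_k\ :=\ \E_\Psi\,\E_t\!\left[\frac{1}{N}\sum_{|A|=k,\text{ contiguous}}\tr\rho_A^2(t)\right],
\]
which, via $S\ge-\ln\tr\rho^2$ and Jensen's inequality, gives $\E_{|A|=k}S(\rho_A(t))\ge-\ln\bar P_k$. If the monotonicity $\bar P_n\le\bar P_{\tilde n}$ holds throughout $n\in(\tilde n,N/2]$, then at size $n$ we obtain $\E_{|A|=n}S(\rho_A(t))\ge-\ln\bar P_n\ge-\ln\bar P_{\tilde n}\ge(1-\epsilon)\tilde n\sum_{k=2}^d 1/k$, which after reabsorbing lower-order $\epsilon$ into $\epsilon$ matches the corollary's right-hand side. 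The two $1-e^{-\Omega(N)}$ concentration statements (in $\Psi$ and in $t$) transfer from Theorem~\ref{thm:low} by the same probabilistic arguments used there.

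The main obstacle is justifying the monotonicity $k\mapsto\bar P_k$ on $[0,N/2]$. For a Haar-random \emph{full} state this is immediate from Page's formula, but here the initial state is only a Haar-random \emph{product} state and the dynamics only satisfies the non-degenerate-gap condition, so $\bar P_k$ unfolds into a combinatorial sum over SWAP-vs.-identity patterns on the two copies of each site, weighted by the Kronecker deltas coming from the non-degenerate gaps. I expect monotonicity to follow by grouping these terms so that enlarging the subsystem by a single site replaces an identity contribution on that site by the symmetric projector $(I+\text{SWAP})/2$, which cannot increase $\bar P_k$ when paired against the remaining SWAP weights; turning this heuristic into a clean estimate on the time- and $\Psi$-averaged expression is where the real work lies.
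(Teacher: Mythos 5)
Your first step---applying Theorem~\ref{thm:low} at the endpoint $\tilde n$ with the $N$ contiguous translates as the $A_j$'s, and then lifting to larger $n$ by a monotonicity-in-subsystem-size argument---is exactly the right skeleton, and it is what the paper does. But your implementation of the monotonicity step via the averaged purity $\bar P_k$ has two problems, one of which is fatal even if the other were repaired. First, as you acknowledge, the monotonicity $\bar P_n\le\bar P_{\tilde n}$ is left as a heuristic about SWAP/identity patterns; that is the entire content of the extension step, so the proof is not complete. Second, and more seriously, the chain $\E_{|A|=n}S\ge-\ln\bar P_n\ge-\ln\bar P_{\tilde n}\ge(1-\epsilon)\tilde n\sum_{k=2}^d\frac1k$ breaks at the last link: Theorem~\ref{thm:low} lower-bounds the von Neumann entropy at size $\tilde n$, not $-\ln\bar P_{\tilde n}$, and for the states actually arising here the two differ by a constant factor per site. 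For a single Haar-random qubit the diagonal-ensemble marginal is $\diag(p,1-p)$ with $p$ uniform, giving $\E[\tr\sigma_j^2]=2/3$ and hence a purity-based rate of $\ln\frac{3}{2}\approx0.405$ per site, whereas $\sum_{k=2}^2\frac1k=\frac12$; in general $\ln\frac{d+1}{2}<\sum_{k=2}^d\frac1k$. So even a fully rigorous purity route would only certify the coefficient $\ln\frac{d+1}{2}$ per site and could not reach the corollary's stated bound.

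The fix is to do the monotonicity at the level of the von Neumann entropy itself rather than the purity. The paper's proof is one line: weak monotonicity of the von Neumann entropy \cite{LR73} gives
\begin{equation}
\E_{|A|=n}S\big(\rho_A(t)\big)\le\E_{|A|=n+1}S\big(\rho_A(t)\big)
\end{equation}
for every $n<N/2$, pointwise in $t$ and in $\Psi$. This transports the endpoint bound of Theorem~\ref{thm:low} to all $n$ in the stated range with the constant $\sum_{k=2}^d\frac1k$ intact, and the two $1-e^{-\Omega(N)}$ guarantees transfer verbatim because the inequality holds deterministically for each realization. No combinatorics of the time average or of the non-degenerate-gap structure is needed for this step.
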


\begin{proof}
The weak monotonicity \cite{LR73} of the von Neumann entropy implies that
\begin{equation}
    \E_{|A|=n}S\big(\rho_A(t)\big)\le\E_{|A|=n+1}S\big(\rho_A(t)\big)
\end{equation}
for any $n<N/2$. Corollary \ref{cor:large} follows from this inequality and Theorem \ref{thm:low}.
\end{proof}

To demonstrate the genericness of non-degenerate gaps, we define ensembles of Hamiltonians with nearest-neighbor interactions in a chain of $N$ spin-$1/2$'s or qubits ($d=2$). We prove that in each ensemble, the set of Hamiltonians whose spectrum has degenerate gaps is of measure zero (Theorem \ref{thm:generic}). Similar results can be proved in a similar way for other types of systems including qudit systems with short-range interactions in higher spatial dimensions or even with non-local interactions (Appendix \ref{app:extend}).

The non-degenerate gap condition (\ref{eq:ndg}) was assumed in many previous works \cite{Rei08, LPSW09, Sho11, SF12, RK12, FBC17, WGRE19} on the equilibration of quantum systems. Both Theorem \ref{thm:low} and the results of these works apply to almost every Hamiltonian in each ensemble we define.

Let
\begin{equation}
\hat\sigma_j^x=
\begin{pmatrix}
0 & 1 \\
1 & 0
\end{pmatrix},\quad
\hat\sigma_j^y=
\begin{pmatrix}
0 & -i \\
i & 0
\end{pmatrix},\quad
\hat\sigma_j^z=
\begin{pmatrix}
1 & 0 \\
0 & -1
\end{pmatrix}
\end{equation}
be the Pauli matrices for the spin-$1/2$ at position $j$, and
\begin{align}
    &J:=(J_k|_{k\in\{x,y,z\}},J_{kl}|_{k,l\in\{x,y,z\}})=(J_x,J_y,J_z,J_{xx},J_{xy},J_{xz},J_{yx},J_{yy},J_{yz},J_{zx},J_{zy},J_{zz}),\\
    &\alpha:=(\alpha_j^k|_{1\le j\le N}^{k\in\{x,y,z\}},\alpha_j^{kl}|_{1\le j\le N-1}^{k,l\in\{x,y,z\}}),\quad R:=(0,1]\times[0,1]\times(0,1]\times[0,1]^{\times8}\times(0,1]\subset\mathbb R^{12}.
\end{align}
Each particular $J$ defines an ensemble of Hamiltonians
\begin{align}
    &H_J:=\{H_J(\alpha):\alpha\in[-1,1]^{\times(12N-9)}\},\\
    &H_J(\alpha):=\sum_{j=1}^N\sum_{k\in\{x,y,z\}}J_k\alpha_j^k\hat\sigma_j^k+\sum_{j=1}^{N-1}\sum_{k,l\in\{x,y,z\}}J_{kl}\alpha_j^{kl}\hat\sigma_j^k\hat\sigma_{j+1}^l. \label{eq:HJa}
\end{align}
Note that $J\in R$ implies $J_x,J_z,J_{zz}>0$. This rules out the possibility that $H_J$ is an ensemble of free-fermion Hamiltonians.

\begin{theorem} \label{thm:generic}
For any $J\in R$, the set of all $\alpha$ such that the spectrum of $H_J(\alpha)$ has degenerate gaps is of measure zero.
\end{theorem}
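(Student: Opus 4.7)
The plan is to encode the ``degenerate gaps'' condition as the vanishing of a single polynomial $D(\alpha)$ in the $12N-9$ entries of $\alpha$, and then to exhibit one concrete $\alpha^*$ at which $D$ does not vanish. Since the zero set of a nonzero real polynomial has Lebesgue measure zero in $\mathbb{R}^{12N-9}$, this will close the argument. To construct $D$, I would start from the characteristic polynomial $p(x,\alpha):=\det(xI-H_J(\alpha))$, whose coefficients are polynomial in $\alpha$ because $H_J$ is linear in $\alpha$. The resultant $R(c,\alpha):=\operatorname{Res}_x\bigl(p(x,\alpha),\,p(x+c,\alpha)\bigr)$ is polynomial in $(c,\alpha)$ and factors as $c^{2^N}\prod_{i\neq j}\bigl(c-(E_i(\alpha)-E_j(\alpha))\bigr)$; dividing out $c^{2^N}$ yields a polynomial $\widetilde R(c,\alpha)$ whose $c$-roots are exactly the $2^N(2^N-1)$ gaps. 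The spectrum has non-degenerate gaps iff $\widetilde R$ has only simple roots iff $D(\alpha):=\operatorname{disc}_c\widetilde R(c,\alpha)$ is nonzero, and this $D$ is manifestly a polynomial in $\alpha$.

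To verify $D\not\equiv 0$, I would restrict to a convenient slice by setting $\alpha_j^y=0$ and $\alpha_j^{kl}=0$ for $(k,l)\neq(z,z)$. There $H_J(\alpha)$ reduces to the mixed longitudinal--transverse-field Ising chain $J_x\sum_j\alpha_j^x\hat\sigma_j^x+J_z\sum_j\alpha_j^z\hat\sigma_j^z+J_{zz}\sum_j\alpha_j^{zz}\hat\sigma_j^z\hat\sigma_{j+1}^z$, and the hypothesis $J_x,J_z,J_{zz}>0$ coming from $J\in R$ ensures that all three couplings are truly present, yielding a non-integrable interacting model. I would pick $\alpha_j^z,\alpha_j^{zz}$ generic (so the $\alpha_j^x=0$ diagonal spectrum already has $2^N$ distinct eigenvalues) and then turn on a small generic transverse field $\alpha_j^x$, analyzing the perturbed spectrum by nondegenerate perturbation theory from the diagonal limit.

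The main obstacle will be verifying non-degenerate gaps at this concrete $\alpha^*$. A purely diagonal choice $\alpha_j^x=0$ is not enough: gap identities such as $E_{(+,+,+,+)}-E_{(-,+,+,+)}=E_{(+,+,-,+)}-E_{(-,+,-,+)}$ hold for all $\alpha_j^z,\alpha_j^{zz}$, because flipping the first spin only affects the $\hat\sigma_1^z$ and $\hat\sigma_1^z\hat\sigma_2^z$ terms and hence the change depends only on the (matching) state of spin $2$. Turning on $\alpha_j^x$ introduces second-order eigenvalue shifts of order $J_x^2(\alpha_j^x)^2/\Delta$ whose dependence on $\alpha_j^z,\alpha_j^{zz}$ (through the unperturbed gaps $\Delta$) is rich enough to split every such residual gap coincidence for generic parameters. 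Rigorously certifying this---equivalently, showing that $\widetilde R(c,\alpha^*)$ has $2^N(2^N-1)$ simple roots---is the technical heart of the argument: one must rule out, at the level of polynomial identities in $\alpha$, every symmetry or selection rule that could force two gaps to agree to all orders in the perturbation. Once one such $\alpha^*\in[-1,1]^{12N-9}$ is produced, $D\not\equiv 0$ and the measure-zero conclusion follows.
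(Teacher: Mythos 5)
Your reduction to a polynomial non-vanishing statement is sound and is a legitimate alternative to the paper's route: the paper forms the product $G_J(\alpha)=\prod(E_j+E_k-E_{j'}-E_{k'})$ and invokes the fundamental theorem of symmetric polynomials together with the power sums $F_k=\tr\big(H_J^k(\alpha)\big)$ to see that $G_J$ is a polynomial in $\alpha$, whereas you package the same information into a resultant/discriminant $D(\alpha)=\operatorname{disc}_c\widetilde R(c,\alpha)$. Either way, everything now hinges on producing one explicit $\alpha^*$ (anywhere in $\mathbb R^{12N-9}$, since a polynomial vanishing on the cube vanishes identically) at which the gaps are non-degenerate.

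That witness is precisely where your proposal stops, and it is the entire technical content of the theorem. You correctly observe that the purely diagonal slice always has degenerate gaps (flipping spin $1$ changes the energy by an amount depending only on spins $1$ and $2$), and you correctly guess that a transverse field must be switched on and analyzed perturbatively; but ``second-order shifts \dots\ are rich enough to split every such residual gap coincidence for generic parameters'' is an unproven assertion, and ``generic parameters'' reintroduces exactly the genericity claim you are trying to establish, so the argument is circular as stated. Turning on all $N$ transverse fields in one perturbative step also forces you to control second-order sums of the form $\sum_j J_x^2(\alpha_j^x)^2/(E_s-E_{s^{(j)}})$ over all single-spin flips simultaneously, and ruling out coincidences among these to all orders is not obviously tractable. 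The paper avoids this by an induction that adds one site at a time (Lemma \ref{l:HN}): the Hamiltonian $H_{N+1}=H_N+h_N\hat\sigma^x_{N+1}+J_N(2+\hat\sigma^z_N)\hat\sigma^z_{N+1}$ is analyzed with a strict scale hierarchy $J_N\ll h_N\ll\delta$, and the induction carries a second hypothesis --- that the diagonal matrix elements $\langle j|\hat\sigma^z_N|j\rangle$ are pairwise distinct --- which is exactly what guarantees that the second-order shifts $\pm J_N^2(2+\langle j|\hat\sigma^z_N|j\rangle)^2/(2h_N)$ separate all the surviving gap coincidences. Without some device of this kind (an auxiliary induction invariant plus a scale separation that localizes the perturbative analysis to a single new spin), your plan identifies the obstacle but does not overcome it, so it does not yet constitute a proof.
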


It is an open problem to prove an analogue of this theorem for translationally invariant systems. Progress in this direction was made in Ref. \cite{HH19}.

\section{Many-body localization}

$H_J$ is MBL for some $J\in R$. In particular, $H_J$ reduces to the Imbrie model \cite{Imb16} when
\begin{equation}
    J=(\lambda,0,1,0,0,0,0,0,0,0,0,1)
\end{equation}
with small $\lambda>0$.

Other models of MBL can be considered. For example, let
\begin{equation}
    h:=(h_x,h_z,h_{zz}),\quad\gamma:=(\gamma_j^k|_{1\le j\le N}^{k\in\{x,z\}},\gamma_j^{zz}|_{1\le j\le N-1}).
\end{equation}
Each particular $h$ defines an ensemble of Hamiltonians
\begin{align}
    &H^{XXZ}_h:=\{H^{XXZ}_h(\gamma):\gamma\in[-1,1]^{\times(3N-1)}\},\\
    &H^{XXZ}_h(\gamma)=\sum_{j=1}^N(h_x\gamma_j^x\hat\sigma_j^x+h_z\gamma_j^z\hat\sigma_j^z)+\sum_{j=1}^{N-1}(\hat\sigma_j^x\hat\sigma_{j+1}^x+\hat\sigma_j^y\hat\sigma_{j+1}^y+h_{zz}\gamma_j^{zz}\hat\sigma_j^z\hat\sigma_{j+1}^z).
\end{align}
Note that $H_h^{XXZ}$ is a perturbed random-field $XX$ chain if $h_x,h_{zz}>0$ are small.

\begin{corollary} \label{cor:mbl}
For any $h\in(0,+\infty)^{\times3}$, the set of all $\gamma$ such that the spectrum of $H^{XXZ}_h(\gamma)$ has degenerate gaps is of measure zero.
\end{corollary}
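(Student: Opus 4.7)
The plan is to follow the proof strategy of Theorem \ref{thm:generic}. The matrix entries of $H^{XXZ}_h(\gamma)$ in the computational basis are affine in $\gamma$, so the coefficients of the characteristic polynomial $\det(\lambda I - H^{XXZ}_h(\gamma))$ are polynomials in $\gamma$. The ``degenerate gaps'' condition is equivalent to the vanishing of
\begin{equation*}
P(\gamma) := \prod_{\substack{(j,k) \neq (j',k') \\ j \neq k,\ j' \neq k'}} \bigl( E_j(\gamma) - E_k(\gamma) - E_{j'}(\gamma) + E_{k'}(\gamma) \bigr),
\end{equation*}
a product that is invariant under permutations of the eigenvalue indices and hence, by the fundamental theorem on symmetric polynomials, expressible as a polynomial in the coefficients of the characteristic polynomial. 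Thus $P$ is itself a polynomial in $\gamma$, and $\{\gamma : P(\gamma) = 0\}$ is a real algebraic subvariety of $[-1,1]^{\times (3N-1)}$. Since the zero set of a non-zero real polynomial has Lebesgue measure zero, the proof reduces to producing a single $\gamma^* \in [-1,1]^{\times (3N-1)}$ at which $H^{XXZ}_h(\gamma^*)$ has non-degenerate gaps.

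Producing such a $\gamma^*$ is the main obstacle. After rescaling $H^{XXZ}_h$ by $1/\max\{1, h_x, h_z, h_{zz}\}$ (which preserves non-degenerate gaps), the ensemble becomes the restriction of $H_J$ from Theorem \ref{thm:generic}, for the appropriately rescaled $J = (h_x, 0, h_z, 1, 0, 0, 0, 1, 0, 0, 0, h_{zz})$, to the affine slice where $\alpha_j^{xx} = \alpha_j^{yy} = 1$ and $\alpha_j^y = \alpha_j^{xy} = \alpha_j^{xz} = \alpha_j^{yx} = \alpha_j^{yz} = \alpha_j^{zx} = \alpha_j^{zy} = 0$. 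This slice has measure zero in the $(12N-9)$-dimensional $\alpha$-space, so Theorem \ref{thm:generic} does not apply immediately; however, I would first try to extract from its proof an explicit witness $\alpha^*$ with non-degenerate gaps that happens to lie on this slice, in which case Corollary \ref{cor:mbl} follows.

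Failing that, I would construct $\gamma^*$ by induction on $N$: verify the base case (say $N=2$) by direct symbolic computation, and for the inductive step append one spin to an $(N-1)$-site witness and treat the three new boundary parameters $\gamma_N^x, \gamma_N^z, \gamma_{N-1}^{zz}$ as an analytic perturbation. The eigenvalues of the enlarged Hamiltonian are real analytic in these parameters, so the restriction of $P$ to this three-dimensional cube is real analytic, and generic choices within the cube preserve non-degenerate gaps provided this restriction is not identically zero. The hard part will be controlling the combinatorial explosion of new gap-equality conditions when the Hilbert-space dimension doubles, and ruling out hidden algebraic identities forced by the restricted XXZ form; the only manifest symmetry of $H^{XXZ}_h(\gamma)$ is time reversal, which does not force gap equalities, but one must verify that no further non-trivial identities emerge from the inductive enlargement.
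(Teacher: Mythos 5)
Your reduction to exhibiting a single witness (the symmetric-polynomial argument, reducing to the non-vanishing of $P$ at one point) is exactly the paper's first step and is correct; the proposal stalls precisely where you say it does, at producing the witness, and the paper's resolution is a short trick you are missing. First, you have made the task harder than necessary by insisting on $\gamma^*\in[-1,1]^{\times(3N-1)}$: since $P$ extends to a polynomial on all of $\mathbb R^{3N-1}$, non-vanishing at \emph{any} point of $\mathbb R^{3N-1}$ already forces $P\not\equiv0$, hence its zero set (and a fortiori its intersection with the cube) has measure zero. The paper exploits exactly this freedom: choose $\gamma$ so that the tunable part of $H^{XXZ}_h(\gamma)$ equals $\Lambda H_N$, where $H_N$ is the witness of Lemma \ref{l:HN} --- its three types of terms ($\hat\sigma^x$ fields, $\hat\sigma^z$ fields, $\hat\sigma^z\hat\sigma^z$ couplings) are precisely the ones multiplied by $h_x\gamma_j^x$, $h_z\gamma_j^z$, $h_{zz}\gamma_j^{zz}$, and $h_x,h_z,h_{zz}>0$ makes the matching solvable over $\mathbb R^{3N-1}$. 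Then $H^{XXZ}_h(\gamma)=\Lambda H_N+\sum_j(\hat\sigma_j^x\hat\sigma_{j+1}^x+\hat\sigma_j^y\hat\sigma_{j+1}^y)$; dividing by $\Lambda$, the fixed hopping term becomes an $O(1/\Lambda)$ perturbation of $H_N$, and since non-degenerate gaps is an open condition (finitely many strict inequalities among eigenvalue differences, stable under small perturbations by continuity of eigenvalues), the spectrum has non-degenerate gaps for $\Lambda$ sufficiently large.

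Your two fallbacks do not close this gap. The first fails for the reason you anticipated: the witness of Lemma \ref{l:HN} has no $XX$ or $YY$ couplings, so it does not lie on the slice $\alpha_j^{xx}=\alpha_j^{yy}=1$. The second has a concrete obstruction you gesture at but do not resolve: when you append spin $N+1$, the coupling $\hat\sigma_N^x\hat\sigma_{N+1}^x+\hat\sigma_N^y\hat\sigma_{N+1}^y$ enters with fixed coefficient $1$ --- it is not multiplied by any component of $\gamma$ --- so it cannot be made small by tuning the three new boundary parameters (which you have moreover confined to $[-1,1]$), and a perturbative induction in the style of Lemma \ref{l:HN} does not go through. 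As written, the proposal is therefore incomplete at its essential step.
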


The result stated in the last paragraph of the introduction follows by combining Theorem \ref{thm:low} and Corollary \ref{cor:mbl}.

Finally, we prove that the volume-law coefficient in Eq. (\ref{eq:vol}) is tight in a particular MBL system. Consider a chain of $N$ qudits labeled by $1,2,\ldots,N$. Let $\hat S_j^z$ be the $z$ component of the spin operator at position $j$. Let $(h_1,h_2,\ldots,h_N)\in\mathbb R^N$ be such that the spectrum of
\begin{equation}
    H^{\rm loc}:=\sum_{j=1}^Nh_j\hat S_j^z
\end{equation}
is non-degenerate. Let $H^{\rm mbl}:=H^{\rm loc}+\Delta H$, where $\Delta H$ is an infinitesimal random local perturbation. Appendix \ref{app:qudit} proves that the spectrum of $H^{\rm mbl}$ almost surely has non-degenerate gaps. Let $A\subset\{1,2,\ldots,N\}$ so that $A\sqcup\bar A$ defines a bipartition of the system.

\begin{theorem} \label{thm:tight}
Initialize the system in a Haar-random product state $|\Psi\rangle$. Let
\begin{equation}
\rho_A(t):=\tr_{\bar A}\rho(t),\quad\rho(t):=e^{-iH^{\rm mbl}t}|\Psi\rangle\langle\Psi|e^{iH^{\rm mbl}t}
\end{equation}
be the reduced density matrix of subsystem $A$ at time $t$. For any $A$,
\begin{equation}
    \E_\Psi\lim_{\tau\to+\infty}\frac1\tau\int_0^\tau S\big(\rho_A(t)\big)\,\mathrm dt\le |A|\sum_{k=2}^d\frac1k.
\end{equation}
\end{theorem}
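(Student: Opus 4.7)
The plan is to bound the long-time-averaged entanglement entropy by the entropy of the diagonal ensemble via Jensen's inequality, reduce that to a sum of single-site entropies using the product-state structure of the eigenbasis forced by $\Delta H\to 0$, and conclude with a single-site Haar integral. Since non-degenerate gaps imply a non-degenerate spectrum, the Cesàro time average of $\rho(t)$ equals the diagonal ensemble
\[
\bar\rho := \lim_{\tau\to+\infty}\frac{1}{\tau}\int_0^\tau \rho(t)\,\mathrm dt = \sum_j |\langle E_j|\Psi\rangle|^2\,|E_j\rangle\langle E_j|.
\]
The partial trace commutes with the time average, so $\overline{\rho_A(t)}=\bar\rho_A:=\tr_{\bar A}\bar\rho$, and concavity of the von Neumann entropy (Jensen's inequality) yields
\[
\lim_{\tau\to+\infty}\frac{1}{\tau}\int_0^\tau S(\rho_A(t))\,\mathrm dt \le S(\bar\rho_A),
\]
reducing the task to upper-bounding $\E_\Psi S(\bar\rho_A)$.

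Next I exploit the MBL structure. $H^{\rm loc}$ is diagonal in the product basis $\{|\vec s\rangle=\bigotimes_{j=1}^N|s_j\rangle\}$ of the operators $\hat S_j^z$ and has non-degenerate spectrum by hypothesis; hence, as $\Delta H\to 0$, continuity of spectral projectors at simple eigenvalues on this finite-dimensional Hilbert space forces the eigenstates of $H^{\rm mbl}$ to converge to $|\vec s\rangle$. Using $|\Psi\rangle=\bigotimes_j|\Psi_j\rangle$, the overlaps factorize and
\[
\bar\rho = \sum_{\vec s}\prod_{j=1}^N |\langle s_j|\Psi_j\rangle|^2\,|\vec s\rangle\langle\vec s| = \bigotimes_{j=1}^N \rho_j,\quad \rho_j := \sum_{s=0}^{d-1}|\langle s|\Psi_j\rangle|^2\,|s\rangle\langle s|.
\]
Therefore $\bar\rho_A=\bigotimes_{j\in A}\rho_j$, and additivity of the von Neumann entropy on product states gives $S(\bar\rho_A)=\sum_{j\in A}S(\rho_j)$.

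Finally I compute $\E_\Psi S(\rho_j)$. Because $|\Psi_j\rangle$ is Haar-random on $\mathbb C^d$, the squared-amplitude vector $(|\langle s|\Psi_j\rangle|^2)_{s=0}^{d-1}$ is uniform on the probability simplex (equivalently, Dirichlet$(1,\ldots,1)$-distributed), so each coordinate is marginally Beta$(1,d-1)$. A routine Beta integral via the digamma function gives $-\E[X\ln X]=\frac{1}{d}\sum_{k=2}^d\frac{1}{k}$, so $\E_\Psi S(\rho_j)=\sum_{k=2}^d \frac{1}{k}$. Linearity of expectation combined with the previous two steps then closes the bound. The only subtle point is the $\Delta H\to 0$ limit used in the factorization step; it is justified by joint continuity of the spectral projectors at simple eigenvalues and of the von Neumann entropy on the finite-dimensional Hilbert space, so the interchange of limits is automatic and the remaining manipulations are standard.
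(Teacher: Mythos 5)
Your proposal is correct and follows essentially the same route as the paper: concavity to pass to the diagonal ensemble, reduction to single-site entropies using the (infinitesimally perturbed) product eigenbasis of $H^{\rm loc}$, and the Beta$(1,d-1)$ integral, which is exactly the paper's Lemma \ref{l:ent}. The only cosmetic difference is that you obtain $S(\bar\rho_A)=\sum_{j\in A}S(\rho_j)$ by exact factorization of the diagonal ensemble in the $\Delta H\to0$ limit, whereas the paper gets the same bound slightly more robustly via subadditivity, $S(\sigma_A)\le\sum_{j\in A}S(\sigma_j)$, which only requires the single-site marginals to be near-diagonal.
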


\section*{Acknowledgments}

I would like to thank my postdoctoral advisor Aram W. Harrow for comments on this paper and for collaboration on a related project \cite{HH19}. This work was supported by NSF grant PHY-1818914 and a Samsung Advanced Institute of Technology Global Research Partnership.

\appendix

\section{Proof of Theorem \ref{thm:low}}

\begin{lemma}
Let $|0\rangle$ be a particular state in $\mathbb C^d$. For a Haar-random state $|\Psi'\rangle$,
\begin{equation}
    M_\alpha:=\E_{\Psi'}\big|\langle0|\Psi'\rangle\big|^{2\alpha}=\prod_{j=1}^{d-1}\frac{j}{j+\alpha}.
\end{equation}
\end{lemma}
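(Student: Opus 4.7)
The plan is to reduce the moment $M_\alpha$ to a standard Beta-function integral and collapse the result into a product of Gamma factors.

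First, I would invoke the Gaussian construction of the Haar measure: write $|\Psi'\rangle = z/\|z\|$ with $z\in\mathbb C^d$ having i.i.d.\ standard complex-Gaussian components, and use unitary invariance to take $|0\rangle$ to be the first standard basis vector, so that $|\langle 0|\Psi'\rangle|^2 = |z_1|^2/\|z\|^2$. A classical fact---equivalent to the independence of direction and length for an isotropic Gaussian vector---then says that this random variable follows the Beta$(1,d-1)$ distribution, with density $(d-1)(1-x)^{d-2}$ on $[0,1]$.

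Next, I would compute
\[
M_\alpha = (d-1)\int_0^1 x^\alpha(1-x)^{d-2}\,\mathrm d x = (d-1)\,B(\alpha+1,d-1) = \frac{(d-1)!\,\Gamma(\alpha+1)}{\Gamma(\alpha+d)}
\]
and simplify using $\Gamma(\alpha+d) = \Gamma(\alpha+1)\prod_{j=1}^{d-1}(\alpha+j)$ together with $(d-1)! = \prod_{j=1}^{d-1} j$, which immediately yields the claimed product $\prod_{j=1}^{d-1} j/(j+\alpha)$.

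I do not expect any serious obstacle. The only step requiring a short justification is the Beta$(1,d-1)$ distribution of $|\langle 0|\Psi'\rangle|^2$; this can be read off from the Gaussian representation via the independence of $z/\|z\|$ and $\|z\|$, or alternatively by direct integration over the unit sphere in $\mathbb C^d$ in polar coordinates. For positive integer $\alpha$ a cleaner combinatorial proof via the symmetric-subspace formula for $\E(|\Psi'\rangle\langle\Psi'|)^{\otimes\alpha}$ is also available, but the Beta integral handles all $\alpha>-1$ uniformly and is the shortest route.
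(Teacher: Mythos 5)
Your proposal is correct and follows essentially the same route as the paper: both establish the Beta$(1,d-1)$ density $(d-1)(1-x)^{d-2}$ for $|\langle 0|\Psi'\rangle|^2$ and then evaluate the resulting moment integral (the paper simply asserts the density via a ``tedious multivariable integral,'' whereas you justify it more cleanly through the Gaussian representation and make the Beta--Gamma simplification explicit).
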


\begin{proof}
Calculating a tedious multivariable integral, the probability density function of the random variable $X:=|\langle0|\Psi'\rangle|^2$ is
\begin{equation} \label{eq:fx}
    f(x)=(d-1)(1-x)^{d-2},\quad x\in[0,1].
\end{equation}
Therefore,
\begin{equation}
    M_\alpha=\int_0^1x^\alpha f(x)\,\mathrm dx=\prod_{j=1}^{d-1}\frac{j}{j+\alpha}.
\end{equation}
\end{proof}

\begin{lemma}
Let $|\phi\rangle$ be an arbitrary state in $(\mathbb C^d)^{\otimes N}$. For a Haar-random product state $|\Psi\rangle$ (Definition \ref{def:haar}) and any $\alpha\ge1$,
\begin{equation} \label{eq:chain}
    \E_{\Psi}\big|\langle \phi|\Psi\rangle\big|^{2\alpha}\le M_\alpha^N.
\end{equation}
\end{lemma}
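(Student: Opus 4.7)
The plan is induction on $N$. The base case $N=1$ is immediate from the previous lemma: by the unitary invariance of the Haar measure on $\mathbb{C}^d$, for any normalized $|\phi\rangle$ the random variable $|\langle\phi|\Psi\rangle|^2$ has the same distribution as $|\langle 0|\Psi\rangle|^2$, so its $\alpha$-th moment equals $M_\alpha$ exactly.

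For the inductive step I would split $|\Psi\rangle=|\Psi_1\rangle\otimes|\Psi'\rangle$ with $|\Psi'\rangle=\bigotimes_{j=2}^N|\Psi_j\rangle$ and integrate over the two factors in turn. Introducing the (generally unnormalized) vector $|\chi\rangle:=(I\otimes\langle\Psi'|)|\phi\rangle\in\mathbb{C}^d$, a direct calculation gives $\langle\phi|\Psi\rangle=\overline{\langle\Psi_1|\chi\rangle}$ and $\||\chi\rangle\|^2=\langle\Psi'|\rho|\Psi'\rangle$, where $\rho:=\tr_1(|\phi\rangle\langle\phi|)$ is the reduced density matrix on qudits $2,\ldots,N$. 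Applying the base case to $|\Psi_1\rangle$ with target direction $|\chi\rangle/\||\chi\rangle\|$ then yields $\E_{\Psi_1}|\langle\phi|\Psi\rangle|^{2\alpha}=M_\alpha\,\langle\Psi'|\rho|\Psi'\rangle^\alpha$, so the task reduces to showing $\E_{\Psi'}\langle\Psi'|\rho|\Psi'\rangle^\alpha\le M_\alpha^{N-1}$.

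To establish this remaining bound I would spectrally decompose $\rho=\sum_k p_k|\phi_k\rangle\langle\phi_k|$ with $p_k\ge 0$, $\sum_k p_k=1$, and each $|\phi_k\rangle$ a normalized state on the last $N-1$ qudits, giving $\langle\Psi'|\rho|\Psi'\rangle=\sum_k p_k|\langle\phi_k|\Psi'\rangle|^2$. Here the hypothesis $\alpha\ge 1$ enters crucially: convexity of $x\mapsto x^\alpha$ on $[0,\infty)$ and Jensen's inequality for the probability distribution $\{p_k\}$ give $\langle\Psi'|\rho|\Psi'\rangle^\alpha\le\sum_k p_k|\langle\phi_k|\Psi'\rangle|^{2\alpha}$; taking $\E_{\Psi'}$ and applying the inductive hypothesis to each $|\phi_k\rangle$ bounds the right-hand side by $\sum_k p_k M_\alpha^{N-1}=M_\alpha^{N-1}$, closing the induction. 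All the tools involved — Haar rotational invariance, the ``peel off one qudit'' manipulation, and Jensen's inequality — are elementary, so I do not anticipate a real obstacle; $\alpha\ge 1$ is used only in the Jensen step, and the bound would in general fail for $\alpha<1$ (where the inequality reverses).
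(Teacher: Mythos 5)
Your proof is correct and follows essentially the same route as the paper: peel off one qudit, use Haar invariance to reduce the single-spin average to $M_\alpha$ times the $\alpha$-th power of a convex combination, apply Jensen's inequality (this is exactly where $\alpha\ge1$ is used in the paper as well), and iterate. The only cosmetic difference is that you decompose via the spectral decomposition of $\tr_1(|\phi\rangle\langle\phi|)$ while the paper expands $|\phi\rangle$ in the computational basis of the peeled-off spin; both yield the same convex combination structure and the arguments are interchangeable.
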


\begin{proof}
Let $A\otimes\bar A$ be a bipartition of the system, where subsystem $A$ consists of a single spin. Let $\{|j\rangle_A\}_{j=0}^{d-1}$ be the computational basis of subsystem $A$ so that
\begin{equation}
    |\phi\rangle=\sum_{j=0}^{d-1}c_j|j\rangle_A\otimes|\phi_j\rangle_{\bar A},\quad\sum_{j=0}^{d-1}|c_j|^2=1.
\end{equation}
Let $|\Psi\rangle=|\Psi_A\rangle\otimes|\Psi_{\bar A}\rangle$, where $|\Psi_A\rangle$ is a Haar-random state in $\mathbb C^d$ and $|\Psi_{\bar A}\rangle$ is a Haar-random product state in $(\mathbb C^d)^{\otimes(N-1)}$. It is easy to see that for any fixed $|\Psi_{\bar A}\rangle$,
\begin{equation}
    \E_{\Psi_A}\big|\langle \phi|\Psi\rangle\big|^{2\alpha}=M_\alpha\left(\sum_{j=0}^{d-1}|c_j|^2\big|\langle\phi_j|\Psi_{\bar A}\rangle\big|^2\right)^\alpha\le M_\alpha\sum_{j=0}^{d-1}|c_j|^2\big|\langle\phi_j|\Psi_{\bar A}\rangle\big|^{2\alpha}.
\end{equation}
Hence,
\begin{equation} \label{eq:step}
    \E_\Psi\big|\langle \phi|\Psi\rangle\big|^{2\alpha}=\E_{\Psi_{\bar A}}\E_{\Psi_A}\big|\langle \phi|\Psi\rangle\big|^{2\alpha}\le M_\alpha\sum_{j=0}^{d-1}|c_j|^2\E_{\Psi_{\bar A}}\big|\langle\phi_j|\Psi_{\bar A}\rangle\big|^{2\alpha}\le M_\alpha\max_{\phi'}\E_{\Psi_{\bar A}}\big|\langle\phi'|\Psi_{\bar A}\rangle\big|^{2\alpha},
\end{equation}
where $|\phi'\rangle$ is a state in $(\mathbb C^d)^{\otimes(N-1)}$. We obtain (\ref{eq:chain}) by iteratively applying (\ref{eq:step}).
\end{proof}

\begin{lemma} \label{l:effdim}
Let $\{|b_1\rangle,|b_2\rangle,\ldots,|b_{d^N}\rangle\}$ be an arbitrary orthonormal basis of the Hilbert space $(\mathbb C^d)^{\otimes N}$. For a Haar-random product state $|\Psi\rangle$ and any $\alpha\ge1$,
\begin{equation}
\E_\Psi\sum_{j=1}^{d^N}\big|\langle b_j|\Psi\rangle\big|^{2\alpha}\le(M_\alpha d)^N.
\end{equation}
Furthermore, for an arbitrarily small constant $\varepsilon>0$,
\begin{equation} \label{eq:mar}
\Pr_\Psi\left(\sum_{j=1}^{d^N}\big|\langle b_j|\Psi\rangle\big|^{2\alpha}\le(M_\alpha d)^{(1-\varepsilon)N}\right)=1-e^{-\Omega(N)}.
\end{equation}
\end{lemma}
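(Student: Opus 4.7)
The plan is to derive both assertions of Lemma \ref{l:effdim} as essentially immediate consequences of the preceding lemma (inequality (\ref{eq:chain})). The first inequality is a linearity-of-expectation argument; the second is Markov's inequality applied to the first, combined with the observation that $M_\alpha d < 1$ strictly when $\alpha > 1$.

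For the expectation bound, I would apply (\ref{eq:chain}) separately to each basis vector $|b_j\rangle$ (viewing it as the ``arbitrary state $|\phi\rangle$''), obtaining $\E_\Psi|\langle b_j|\Psi\rangle|^{2\alpha}\le M_\alpha^N$ for every $j$. Summing these $d^N$ inequalities by linearity of expectation gives
\[
\E_\Psi\sum_{j=1}^{d^N}\big|\langle b_j|\Psi\rangle\big|^{2\alpha}\le d^N\cdot M_\alpha^N=(M_\alpha d)^N,
\]
which is the first claim. Note that this step does not use orthonormality of the basis; any family of $d^N$ states would suffice.

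For the high-probability bound, I would apply Markov's inequality to the nonnegative random variable $\sum_j|\langle b_j|\Psi\rangle|^{2\alpha}$, yielding
\[
\Pr_\Psi\left(\sum_{j=1}^{d^N}\big|\langle b_j|\Psi\rangle\big|^{2\alpha}>(M_\alpha d)^{(1-\varepsilon)N}\right)\le(M_\alpha d)^{\varepsilon N}.
\]
It remains to verify $M_\alpha d<1$ strictly for $\alpha>1$. From the explicit product, $M_1=\prod_{j=1}^{d-1}\frac{j}{j+1}=\frac{1}{d}$ by telescoping, so $M_1 d=1$; and each factor $\frac{j}{j+\alpha}$ is strictly decreasing in $\alpha$, so $M_\alpha d<1$ whenever $\alpha>1$. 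Hence $(M_\alpha d)^{\varepsilon N}=e^{-\Omega(N)}$, giving (\ref{eq:mar}). The edge case $\alpha=1$ is trivial since $\sum_j|\langle b_j|\Psi\rangle|^2=1$ identically, which is bounded above by $(M_1 d)^{(1-\varepsilon)N}=1$.

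There is no real obstacle here; the only point requiring mild care is the monotonicity check $M_\alpha d<1$ for $\alpha>1$, which ensures that Markov's inequality produces a genuinely exponentially small tail rather than a vacuous bound.
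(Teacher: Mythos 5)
Your proposal is correct and follows essentially the same route as the paper: the expectation bound is obtained by summing inequality (\ref{eq:chain}) over the $d^N$ basis vectors, and the probabilistic bound (\ref{eq:mar}) is Markov's inequality applied to that expectation. The paper leaves the details implicit; your added verification that $M_\alpha d<1$ strictly for $\alpha>1$ (via $M_1=1/d$ and monotonicity of each factor $\tfrac{j}{j+\alpha}$) is exactly the point needed to make the Markov tail genuinely $e^{-\Omega(N)}$, and your handling of the trivial case $\alpha=1$ is fine.
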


\begin{proof}
The probabilistic bound (\ref{eq:mar}) follows from Markov's inequality.
\end{proof}

For $\alpha=2$, Lemma \ref{l:effdim} reduces to Lemma 5 in Ref. \cite{HH19}.

Let $\{|j\rangle\}_{j=1}^{d^N}$ be a complete set of eigenstates of $H$. Assuming that the spectrum of $H$ is non-degenerate, the energy basis $\{|j\rangle\}$ is unambiguously defined. The effective dimension of a state $|\psi\rangle$ is defined as
\begin{equation}
1/D^{\rm eff}_\psi=\sum_{j=1}^{d^N}\big|\langle j|\psi\rangle\big|^4.
\end{equation}

\begin{lemma} \label{l:diag}
Initialize the system in a pure state $|\psi\rangle$. Let
\begin{equation}
\sigma:=\lim_{\tau\to+\infty}\frac1\tau\int_0^\tau\rho(t)\,\mathrm dt,\quad\rho(t):=e^{-iHt}|\psi\rangle\langle\psi|e^{iHt}
\end{equation}
be the infinite time average and $\sigma_{A_j}:=\tr_{\bar A_j}\sigma$ be the reduced density matrix of subsystem $A_j$. For any Hamiltonian $H$ with non-degenerate spectrum and any $\alpha>1$,
\begin{equation} \label{eq:sub}
    \frac{1}{m}\sum_{j=1}^mS(\sigma_{A_j})\ge\frac{n}{N(1-\alpha)}\ln\sum_{j=1}^{d^N}\big|\langle j|\psi\rangle\big|^{2\alpha}.
\end{equation}
\end{lemma}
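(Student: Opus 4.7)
The plan is to combine three ingredients—the diagonal form of $\sigma$ in the energy basis, the monotonicity of R\'enyi entropies in their order, and a Shearer-type subadditivity inequality for the von Neumann entropy applied to the covering hypothesis.

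First, because the spectrum of $H$ is non-degenerate, time averaging dephases $|\psi\rangle\langle\psi|$ in the energy basis and yields $\sigma = \sum_j p_j |j\rangle\langle j|$ with $p_j := |\langle j|\psi\rangle|^2$. In particular $S(\sigma) = -\sum_j p_j \ln p_j$ coincides with the Shannon entropy of the classical distribution $(p_j)$. Next, for $\alpha > 1$, the monotonicity of R\'enyi entropies in their order (equivalently, Jensen's inequality applied to the convex function $-\ln$) gives
\[ S(\sigma) \ge \frac{1}{1-\alpha} \ln \sum_{j=1}^{d^N} p_j^\alpha = \frac{1}{1-\alpha} \ln \sum_{j=1}^{d^N} \big|\langle j|\psi\rangle\big|^{2\alpha}, \]
and both sides are non-negative since $1-\alpha<0$ and $\sum_j p_j^\alpha \le 1$.

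Finally, the hypothesis that each spin lies in exactly $mn/N$ of the subsystems $A_1,\ldots,A_m$ is a uniform $(mn/N)$-fold covering of the $N$ sites. Shearer's covering inequality applied to the von Neumann entropy of $\sigma$ then yields $\sum_{j=1}^m S(\sigma_{A_j}) \ge (mn/N)\, S(\sigma)$. Dividing by $m$ and chaining with the R\'enyi bound (multiplied by $n/N$) gives exactly (\ref{eq:sub}).

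The main obstacle is the Shearer step, which is the only non-trivial entropy inequality in the argument. Classically it is proved from the entropy chain rule together with the fact that conditioning reduces entropy; the quantum version has an identical proof once one replaces ``conditioning reduces entropy'' by its quantum analogue $S(A|BC)\le S(A|B)$, i.e., strong subadditivity of the von Neumann entropy. I would either cite this quantum Shearer inequality or include a short inductive proof that peels off the subsystems one at a time and invokes SSA at each step. The remaining steps—dephasing in the energy basis and the Shannon-vs-R\'enyi bound—are routine once the non-degenerate-spectrum hypothesis has been used to put $\sigma$ in diagonal form.
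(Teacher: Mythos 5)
Your proposal is correct and follows essentially the same route as the paper: dephasing to the diagonal ensemble, a strong-subadditivity (quantum Shearer) argument for the uniform covering to get $\frac1m\sum_j S(\sigma_{A_j})\ge\frac{n}{N}S(\sigma)$, and then monotonicity of the R\'enyi entropy in $\alpha$. The paper states these steps in one line without spelling out the covering argument; your expanded treatment of the Shearer step via $S(A|BC)\le S(A|B)$ is exactly what the paper's citation of strong subadditivity is standing in for.
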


\begin{proof}
Expanding $|\psi\rangle$ in the energy basis, it is easy to see that
\begin{equation}
    \sigma=\sum_{j=1}^{d^N}p_j|j\rangle\langle j|,\quad p_j:=\big|\langle j|\psi\rangle\big|^2
\end{equation}
is the so-called diagonal ensemble. Using the strong subadditivity \cite{LR73} of the von Neumann entropy and the monotonicity of the R\'enyi entropy,
\begin{equation}
\frac{1}{m}\sum_{j=1}^mS(\sigma_{A_j})\ge\frac{n}{N}S(\sigma)=-\frac{n}{N}\sum_{j=1}^{d^N}p_j\ln p_j\ge\frac{n}{N(1-\alpha)}\ln\sum_{j=1}^{d^N}p_j^\alpha,\quad\forall\alpha>1.
\end{equation}
\end{proof}

Let $\|X\|_1:=\tr\sqrt{X^\dag X}$ denote the trace norm.

\begin{lemma} [\cite{LPSW09, Sho11}] \label{eff}
Using the notation of Lemma \ref{l:diag}, for any Hamiltonian $H$ whose spectrum has non-degenerate gaps,
\begin{equation} \label{eq:eq}
\lim_{\tau\to+\infty}\frac1\tau\int_0^\tau\|\rho_{A_j}(t)-\sigma_{A_j}\|_1\,\mathrm dt\le d^n\big/\sqrt{D^{\rm eff}_\psi}.
\end{equation}
\end{lemma}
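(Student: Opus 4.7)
The plan is to pass from the trace norm to the Hilbert--Schmidt norm (where non-degenerate gaps allow the infinite-time average to be computed in closed form) and then reduce the resulting expression to a subsystem purity by a standard swap-trick calculation.

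First I would fix a complete set of energy eigenstates $\{|k\rangle\}$, set $c_k:=\langle k|\psi\rangle$, $p_k:=|c_k|^2$, and introduce the transition matrices $T_{kl}:=\tr_{\bar A_j}|k\rangle\langle l|$ acting on $A_j$. Since the $k=l$ part of $\rho_{A_j}(t)$ is precisely $\sigma_{A_j}$,
\begin{equation}
\rho_{A_j}(t)-\sigma_{A_j}=\sum_{k\neq l}c_k\bar c_l\,e^{-i(E_k-E_l)t}\,T_{kl}.
\end{equation}
Two Cauchy--Schwarz steps --- Jensen's inequality applied to the time integral, and the finite-dimensional norm bound $\|X\|_1\le\sqrt{\mathrm{rank}(X)}\,\|X\|_2\le\sqrt{d^n}\,\|X\|_2$ --- then give
\begin{equation} \label{eq:two-cs}
\left(\lim_{\tau\to+\infty}\frac{1}{\tau}\int_0^\tau\|\rho_{A_j}(t)-\sigma_{A_j}\|_1\,\mathrm dt\right)^2\le d^n\lim_{\tau\to+\infty}\frac{1}{\tau}\int_0^\tau\|\rho_{A_j}(t)-\sigma_{A_j}\|_2^2\,\mathrm dt.
\end{equation}

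Next I would evaluate the right-hand side. Expanding the squared Hilbert--Schmidt norm produces a quadruple sum over $(k,l,k',l')$ with $k\neq l$ and $k'\neq l'$, weighted by the oscillating phase $e^{-i(E_k-E_l-E_{k'}+E_{l'})t}$. Under the non-degenerate gap condition (Definition \ref{def:ndg}) this phase time-averages to zero unless $(k,l)=(k',l')$, collapsing the sum to
\begin{equation}
\lim_{\tau\to+\infty}\frac{1}{\tau}\int_0^\tau\|\rho_{A_j}(t)-\sigma_{A_j}\|_2^2\,\mathrm dt=\sum_{k\neq l}p_kp_l\,\|T_{kl}\|_2^2\le\sum_{k,l}p_kp_l\,\tr[T_{kl}T_{kl}^\dagger].
\end{equation}

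Finally I would recognize the majorant as a subsystem purity. On two copies of the system, the identity $\tr_A[XY]=\tr_{AA'}[(X\otimes Y)F_{AA'}]$ applied to the outer $A_j$-trace, together with $F^2=I$, should yield after a direct expansion
\begin{equation}
\sum_{k,l}p_kp_l\,\tr[T_{kl}T_{kl}^\dagger]=\tr\big[(\sigma\otimes\sigma)(I_{A_jA_j'}\otimes F_{\bar A_j\bar A_j'})\big]=\tr[\sigma_{\bar A_j}^2].
\end{equation}
A further Cauchy--Schwarz on the matrix elements $(\sigma_{\bar A_j})_{bb'}=\sum_a\langle ab|\sigma|ab'\rangle$ (each a sum of $d^n$ terms) yields $\tr[\sigma_{\bar A_j}^2]\le d^n\,\tr[\sigma^2]=d^n/D^{\rm eff}_\psi$; substituting into (\ref{eq:two-cs}) and taking square roots recovers (\ref{eq:eq}). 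The hard part will be the swap-trick identity: one has to check carefully that the partial trace over $\bar A_j$ hidden inside $T_{kl}$, combined with the explicit swap on $A_j$ coming from $\tr_{A_j}[XY]$, leaves behind $I$ on $A_jA_j'$ and $F$ on $\bar A_j\bar A_j'$, so that what remains after contraction against $\sigma\otimes\sigma$ is the purity of the \emph{complement} $\bar A_j$ rather than of $A_j$ itself. Once this identification is in hand, everything else is a routine chain of elementary inequalities.
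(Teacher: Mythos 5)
Your argument is correct, and it is essentially the standard proof of this lemma from the cited references \cite{LPSW09, Sho11} (the paper itself imports the statement without proof): pass to the Hilbert--Schmidt norm via $\|X\|_1\le\sqrt{d^n}\,\|X\|_2$ and Jensen, use non-degenerate gaps to collapse the time-averaged quadruple sum to the diagonal, identify the result with $\tr[\sigma_{\bar A_j}^2]$ by the swap trick, and finish with $\tr[\sigma_{\bar A_j}^2]\le d^n\tr[\sigma^2]=d^n/D^{\rm eff}_\psi$. All the steps, including the delicate identification of the purity of the \emph{complement}, check out.
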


\begin{lemma} [continuity of the von Neumann entropy \cite{Fan73, Aud07}] \label{l:cont}
Let $T:=\|\rho-\sigma\|_1/2$ be the trace distance between two density matrices $\rho,\sigma$ on the Hilbert space $\mathbb C^D$. Then,
\begin{equation}
    |S(\rho)-S(\sigma)|\le T\ln(D-1)-T\ln T-(1-T)\ln(1-T).
\end{equation}
\end{lemma}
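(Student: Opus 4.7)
The statement is the Fannes-Audenaert bound, and my strategy is to reduce it to a classical optimization problem over sorted eigenvalue distributions.

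First, since $S(\rho)$ and $S(\sigma)$ depend only on the spectra of $\rho$ and $\sigma$, I would write $r_1 \ge r_2 \ge \cdots \ge r_D$ and $s_1 \ge s_2 \ge \cdots \ge s_D$ for the sorted eigenvalues and invoke Mirsky's theorem (the Hermitian analogue of Weyl's singular value inequality), giving
\begin{equation}
\sum_{i=1}^D |r_i - s_i| \le \|\rho - \sigma\|_1 = 2T.
\end{equation}
It then suffices to bound $|\sum_i (\eta(r_i) - \eta(s_i))|$ over all probability vectors $r, s$ with this $\ell^1$ constraint, where $\eta(x) := -x \ln x$.

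Second, I would identify the extremal pair. A rearrangement argument exploiting the strict concavity of $\eta$ should show that the maximum of the entropy gap is attained at the boundary configuration $r = (1, 0, \ldots, 0)$ and $s = (1 - T, T/(D-1), \ldots, T/(D-1))$, which saturates the trace distance at exactly $T$. Intuitively, concentrating one distribution on a single atom and spreading the other as uniformly as possible over the remaining $D - 1$ slots moves the two entropies in opposite directions while keeping the constraint tight. A direct substitution into this extremal pair then yields the claimed right-hand side.

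The main obstacle is the extremality step: justifying that no other probability-vector pair beats this candidate. I would try a two-variable exchange argument showing that any deviation from the extremal shape weakly decreases $|S(\rho) - S(\sigma)|$, so iterating the swaps drives any optimizer to the claimed form. If that route gets cumbersome, I would fall back on Audenaert's original proof, which packages the same convexity considerations into a slicker form.
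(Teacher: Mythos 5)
The paper does not prove this lemma: it is imported verbatim from the literature (Fannes' original inequality \cite{Fan73} and Audenaert's sharp version \cite{Aud07}), so there is no internal argument to compare yours against. Judged on its own terms, your sketch correctly identifies the statement as the Fannes--Audenaert bound, and the extremal pair you name, $r=(1,0,\ldots,0)$ versus $s=(1-T,\,T/(D-1),\ldots,T/(D-1))$, is indeed the configuration that saturates the inequality. But two genuine gaps remain. First, the extremality step \emph{is} the content of Audenaert's theorem; ``a two-variable exchange argument'' is a hope rather than a proof, and your own fallback to ``Audenaert's original proof'' concedes that the hard part has not been done. Second, the Mirsky reduction only yields $T':=\frac12\sum_i|r_i-s_i|\le T$, whereas the right-hand side $g(T):=T\ln(D-1)-T\ln T-(1-T)\ln(1-T)$ is increasing only on $[0,\,1-1/D]$ (compute $g'(T)=\ln\tfrac{(D-1)(1-T)}{T}$, which changes sign at $T=1-1/D$ where $g=\ln D$). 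Hence for $T>1-1/D$ you cannot pass from $g(T')$ to $g(T)$, and the classical bound at the possibly smaller spectral distance can exceed the claimed bound at the true trace distance; Audenaert's proof works with the Jordan decomposition of $\rho-\sigma$ rather than with sorted spectra precisely to handle this regime.

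For the single use the paper makes of the lemma --- Eq.~(\ref{eq:cont}), where $T=e^{-\Omega(N)}$ --- your route is harmless: one is deep inside the monotone regime $T\le 1-1/D$, and even the non-sharp original Fannes bound would suffice there. But as a proof of the lemma as stated, valid for all $T\in[0,1]$, the sketch is incomplete.
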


Since by definition $0\le T\le1$, the right-hand side of this inequality is well defined.

We are ready to prove Theorem \ref{thm:low}. Lemmas \ref{l:effdim} and \ref{l:diag} imply that
\begin{align}
    &\Pr_\Psi\left(D^{\rm eff}_\Psi\ge\left(\frac{d+1}{2}\right)^{(1-\epsilon/2)N}\right)=1-e^{-\Omega(N)},\\
    &\Pr_\Psi\left(\frac{1}{m}\sum_{j=1}^mS(\sigma_{A_j})\ge\frac{(1-\epsilon/3)n}{\alpha-1}\sum_{k=2}^d\ln\left(1+\frac{\alpha-1}k\right)\right)=1-e^{-\Omega(N)},\quad\forall\alpha>1.
\end{align}
Therefore, it suffices to prove that
\begin{equation} \label{eq:suff}
    \Pr_{t\in\mathbb R}\left(\frac{1}{m}\sum_{j=1}^mS\big(\rho_{A_j}(t)\big)\ge (1-\epsilon)n\sum_{k=2}^d\frac1k\right)=1-e^{-\Omega(N)}
\end{equation}
assuming that
\begin{align}
&D^{\rm eff}_\Psi\ge\left(\frac{d+1}{2}\right)^{(1-\epsilon/2)N},\label{eq:eff}\\
&\frac{1}{m}\sum_{j=1}^mS(\sigma_{A_j})\ge(1-2\epsilon/3)n\sum_{k=2}^d\frac1k.\label{eq:low} 
\end{align}
(\ref{eq:upto}), (\ref{eq:eq}), and (\ref{eq:eff}) imply that
\begin{equation}
\lim_{\tau\to+\infty}\frac1\tau\int_0^\tau\frac{1}{m}\sum_{j=1}^m\|\rho_{A_j}(t)-\sigma_{A_j}\|_1\,\mathrm dt\le d^n\big/\sqrt{D^{\rm eff}_\Psi}\le\left(\frac{2}{d+1}\right)^{\epsilon N/4}=e^{-\Omega(N)}.
\end{equation}
Markov's inequality implies that
\begin{equation} \label{eq:close}
    \Pr_{t\in\mathbb R}\left(\frac{1}{m}\sum_{j=1}^m\|\rho_{A_j}(t)-\sigma_{A_j}\|_1=e^{-\Omega(N)}\right)=1-e^{-\Omega(N)}.
\end{equation}
Due to the continuity of the von Neumann entropy (Lemma \ref{l:cont}),
\begin{equation} \label{eq:cont}
    \frac1m\sum_{j=1}^m\|\rho_{A_j}(t)-\sigma_{A_j}\|_1=e^{-\Omega(N)}\implies\frac1m\sum_{j=1}^m\big|S\big(\rho_{A_j}(t)\big)-S(\sigma_{A_j})\big|=e^{-\Omega(N)}.
\end{equation}
Equation (\ref{eq:suff}) follows from (\ref{eq:low}), (\ref{eq:close}), and (\ref{eq:cont}).

\section{Proof of Theorem \ref{thm:generic}} \label{app:generic}

We begin by following the proof of Lemma 8 in Ref. \cite{HH19}. Let $\{E_j\}_{j=1}^{2^N}$ with be the eigenvalues of $H_J(\alpha)$ and
\begin{equation}
G_J(\alpha):=\prod_{((j\neq j')\lor(k\neq k'))\land((j\neq k')\lor(k\neq j'))}(E_j+E_k-E_{j'}-E_{k'})
\end{equation}
so that $G_J(\alpha)=0$ if and only if the spectrum of $H_J(\alpha)$ has degenerate gaps (Definition \ref{def:ndg}). It is easy to see that $G_J(\alpha)$ is a symmetric polynomial in $E_1,E_2,\ldots,E_{2^N}$. The fundamental theorem of symmetric polynomials implies that $G_J(\alpha)$ can be expressed as a polynomial in $F_1,F_2,\ldots$, where
\begin{equation}
F_k:=\sum_{j=1}^{2^N}E_j^k=\tr\big(H_J^k(\alpha)\big).
\end{equation}
Expanding $H_J^k(\alpha)$ in the Pauli basis and taking the trace, we see that $F_k$ and hence $G_J(\alpha):[-1,1]^{\times(12N-9)}\to\mathbb R$ are polynomials in the entries of $\alpha$. Since the zeros of a multivariable polynomial are of measure zero unless the polynomial is identically zero, it suffices to find a particular $\alpha\in\mathbb R^{12N-9}$ such that the spectrum of $H_J(\alpha)$ has non-degenerate gaps. This is done in the following lemma.

\begin{lemma} \label{l:HN}
For any positive integer $N$, there exists $(h_1,J_1,h_2,J_2,\ldots,h_{N-1},J_{N-1})\in\mathbb R^{2(N-1)}$ such that the spectrum of 
\begin{equation} \label{eq:HN}
    H_N=\hat\sigma^z_1+\sum_{j=1}^{N-1}\big(h_j\hat\sigma^x_{j+1}+J_j(2+\hat\sigma^z_j)\hat\sigma^z_{j+1}\big)
\end{equation}
has non-degenerate gaps.
\end{lemma}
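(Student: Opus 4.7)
The plan is to induct on $N$. The base case $N=1$ is immediate: $H_1 = \hat\sigma^z_1$ has eigenvalues $\pm 1$, giving a single (hence non-degenerate) gap. For the inductive step, fix parameters $(h_j^*, J_j^*)_{j=1}^{N-2}$ making $H_{N-1}$ have non-degenerate gaps, and view $H_N = H_{N-1} + h_{N-1}\hat\sigma^x_N + J_{N-1}(2+\hat\sigma^z_{N-1})\hat\sigma^z_N$ as a two-parameter family in $(h_{N-1}, J_{N-1})$. Exactly as in Appendix \ref{app:generic}, the gap-degeneracy condition on $H_N$ is encoded by the vanishing of a single polynomial $D(h_{N-1}, J_{N-1})$ in these two variables, so it suffices to find a single pair at which $D\neq 0$.

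I would probe the line $h_{N-1}=0$, where $\hat\sigma^z_N$ commutes with $H_N$ and the Hamiltonian splits into two sectors labeled by $s\in\{\pm 1\}$, each governed by $H_{N-1}+sJ_{N-1}(2+\hat\sigma^z_{N-1})$ on the first $N-1$ spins. Non-degenerate perturbation theory in small $J_{N-1}$, valid thanks to the inductive hypothesis, yields eigenvalues $E_k + sJ_{N-1}(2+\mu_k) + O(J_{N-1}^2)$ with $\mu_k := \langle\psi_k|\hat\sigma^z_{N-1}|\psi_k\rangle \in [-1,1]$, where $\{E_k,|\psi_k\rangle\}$ is the spectrum of $H_{N-1}$. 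Matching the $J_{N-1}^0$ coefficients of two putatively equal gaps forces $(k,k')=(l,l')$ via the inductive gap hypothesis; matching the $J_{N-1}^1$ coefficients then yields $(s-t)(2+\mu_k)=(s'-t')(2+\mu_{k'})$. Since $|\mu_k|\le 1$ makes $2+\mu_k$ strictly positive, the mixed-sign possibilities $(s=t, s'\neq t')$ and $(s\neq t, s'=t')$ are ruled out, and the only nontrivial way to satisfy the equation is $\mu_k=\mu_{k'}$ for some $k\neq k'$.

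The main obstacle is therefore ruling out such a collision of expectation values. My plan is to carry a strengthened inductive invariant that also requires all $\{\mu_k\}$ to be distinct; this distinctness is itself the non-vanishing of a polynomial in the parameters of $H_{N-1}$, so the joint property persists on a set of full measure by applying the measure-zero argument of Appendix \ref{app:generic} to the product of the two polynomials. Propagating the strengthened invariant from stage $N-1$ to stage $N$ is an analogous check: the two-variable polynomial enforcing distinctness of $\langle\hat\sigma^z_N\rangle$ in the eigenbasis of $H_N$ vanishes on $h_{N-1}=0$ (where $\hat\sigma^z_N$ is conserved and takes value $\pm 1$ in each eigenstate) but is not identically zero, because second-order perturbation theory in $h_{N-1}$ generically shifts $\langle\hat\sigma^z_N\rangle$ by different amounts in different eigenstates, breaking the sector-wise degeneracy. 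With both invariants in place, the leading-order analysis above delivers a concrete $(0,J_{N-1})$ at which $D\neq 0$, closing the induction.
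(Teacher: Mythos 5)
Your overall architecture matches the paper's: induction on $N$ with the strengthened invariant that the eigenstate expectation values $\{\langle j|\hat\sigma^z_N|j\rangle\}$ are pairwise distinct, plus perturbation theory to separate the levels. But your perturbative regime is genuinely different, and that is where a real gap opens up. The paper takes $0<J_N\ll h_N\ll\delta$ (Eq.~(\ref{eq:cond})), so the transverse field on the new spin is part of the unperturbed Hamiltonian: gap non-degeneracy then comes from the \emph{second}-order shift $J_N^2(2+\langle j|\hat\sigma^z_N|j\rangle)^2/(2h_N)$, and — crucially — induction hypothesis 2 for $H_{N+1}$ is verified directly by first-order perturbation theory at the very same point, yielding $(j\circ|\hat\sigma^z_{N+1}|j\circ)=J_N(2+\langle j|\hat\sigma^z_N|j\rangle)/h_N+O(J_N^2)$, which are distinct by the inductive invariant. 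You instead set $h_{N-1}=0$, which makes the gap analysis cleaner (first order in $J_{N-1}$ suffices, and your sign analysis using $2+\mu_k>0$ is correct), but it destroys the second invariant at your construction point: with $\hat\sigma^z_N$ conserved, the expectation values take only the two values $\pm1$.

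Your proposed repair — treat distinctness of $\{\langle j|\hat\sigma^z_N|j\rangle\}$ as the non-vanishing of a polynomial in $(h_{N-1},J_{N-1})$ and multiply it into the measure-zero argument — has two unfilled holes. First, the symmetric-polynomial argument of Appendix \ref{app:generic} applies to functions of the \emph{eigenvalues} alone; eigenstate expectation values are algebraic, not polynomial, functions of the couplings, so the claim that $\prod_{j<k}\big(\langle j|\hat\sigma^z_N|j\rangle-\langle k|\hat\sigma^z_N|k\rangle\big)$ is (after clearing Vandermonde denominators coming from the spectral projectors) a polynomial requires a separate argument you do not give. Second, and more seriously, showing that this quantity is not identically zero requires exhibiting a concrete point where all $2^N$ expectation values differ; your justification is the word ``generically.'' Making it precise forces you into a two-scale computation ($0<h_{N-1}\ll J_{N-1}\ll$ gaps of $H_{N-1}$) in which the second-order shift of $\langle\hat\sigma^z_N\rangle$ is $\sim -s\,h_{N-1}^2/\big(2J_{N-1}^2(2+\mu_k)^2\big)$, distinct across $k$ only because of the inductive invariant — i.e., essentially the computation the paper performs, just in the opposite order of limits. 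Since hypothesis 2 is exactly what powers the gap analysis at the next stage, this step cannot be left at the level of ``generically'': as written the induction does not close.
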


\begin{proof}
We prove by induction on $N$.

{\bf Induction hypotheses.} We have two induction hypotheses:
\begin{enumerate}
    \item The spectrum of $H_N$ has non-degenerate gaps.
    \item $\{\langle j|\hat\sigma^z_N|j\rangle\}$ is a set of pairwise distinct numbers, where $\{|j\rangle\}_{j=1}^{2^N}$ are eigenstates of $H_N$.
\end{enumerate}

{\bf Base case.} The base case $N=1$ is trivially true.

{\bf Induction step.} Assuming the induction hypotheses for $H_N$, we prove those for $H_{N+1}$ using perturbation theory. The unperturbed Hamiltonian and the perturbation are
\begin{equation}
    H_{\rm unp}:=H_N+h_N\hat\sigma^x_{N+1},\quad H_{\rm per}:=J_N(2+\hat\sigma^z_N)\hat\sigma^z_{N+1},
\end{equation}
respectively. Let $\{|j\rangle\}$ be eigenstates of $H_N$ with corresponding energies $\{E_j\}$. Let
\begin{equation}
\Delta:=\min_{j\neq k}|E_j-E_k|,\quad\delta:=\min_{((j\neq j')\lor(k\neq k'))\land((j\neq k')\lor(k\neq j'))}|E_j+E_k-E_{j'}-E_{k'}|
\end{equation}
so that $\delta\le\Delta$. Let $|\circ\rangle,|\bullet\rangle$ be eigenstates of $\sigma^x_{N+1}$ with eigenvalues $\pm1$. The eigenstates of $H_{\rm unp}$ are $\{|j\circ\rangle,|j\bullet\rangle\}$ with corresponding energies $\{E_j\pm h_N\}$. We choose $h_N,J_N$ such that
\begin{equation} \label{eq:cond}
    0<J_N\ll h_N\ll\min\big\{\delta,\Delta\min_{j\neq k}\big|\langle j|\hat\sigma^z_N|j\rangle-\langle k|\hat\sigma^z_N|k\rangle\big|\big\}.
\end{equation}
The induction hypotheses imply that the most right-hand side of (\ref{eq:cond}) is positive. The condition (\ref{eq:cond}) implies that the energy gap (minimum difference between adjacent eigenvalues) of $H_{\rm unp}$ is $2h_N$. Let $\{|j\circ),|j\bullet)\}$ be eigenstates of $H_{N+1}:=H_{\rm unp}+H_{\rm per}$ with corresponding energies $\{E_{j\circ},E_{j\bullet}\}$.

{\bf Proof of induction hypothesis 1 for $H_{N+1}$.} Using second-order non-degenerate perturbation theory,
\begin{align}
E_{j\circ}&=E_j+h_N+\langle j\circ|H_{\rm per}|j\circ\rangle+\sum_{k\neq j}\frac{\big|\langle k\circ|H_{\rm per}|j\circ\rangle\big|^2}{E_j-E_k}+\sum_k\frac{\big|\langle k\bullet|H_{\rm per}|j\circ\rangle\big|^2}{E_j-E_k+2h_N}+O(J_N^3)\nonumber\\
&=E_j+h_N+\frac{J_N^2(2+\langle j|\hat\sigma^z_N|j\rangle)^2}{2h_N}+\sum_{k\neq j}\frac{J_N^2\big|\langle k|\hat\sigma^z_N|j\rangle\big|^2}{E_j-E_k+2h_N}+O(J_N^3)\nonumber\\
&=E_j+h_N+\frac{J_N^2(2+\langle j|\hat\sigma^z_N|j\rangle)^2}{2h_N}+O(J_N^2/\Delta),
\end{align}
where we used
\begin{equation}
    \sum_{k\neq j}\frac{\big|\langle k|\hat\sigma^z_N|j\rangle\big|^2}{|E_j-E_k+2h_N|}\le\sum_k\frac{\big|\langle k|\hat\sigma^z_N|j\rangle\big|^2}{\Delta-2h_N}=\frac{1}{\Delta-2h_N}=O(1/\Delta).
\end{equation}
Similarly,
\begin{equation}
E_{j\bullet}=E_j-h_N-\frac{J_N^2(2+\langle j|\hat\sigma^z_N|j\rangle)^2}{2h_N}+O(J_N^2/\Delta).
\end{equation}
Suppose that
\begin{equation}
    E_{j\alpha}+E_{k\beta}=E_{j'\alpha'}+E_{k'\beta'},\quad\alpha,\beta,\alpha',\beta'\in\{\circ,\bullet\}.
\end{equation}
Assume without loss of generality that $j\le k$ and $j'\le k'$. Since $h_N\ll\delta$, we have $j=j'$ and $k=k'$. Since $J_N\ll h_N$, it suffices to exclude the possibilities that (1) $j\neq k$ and $\alpha=\beta'=\circ$ and $\beta=\alpha'=\bullet$ or (2) $j\neq k$ and $\alpha=\beta'=\bullet$ and $\beta=\alpha'=\circ$. In both cases,
\begin{multline}
    \big|J_N^2(2+\langle j|\hat\sigma^z_N|j\rangle)^2/h_N-J_N^2(2+\langle k|\hat\sigma^z_N|k\rangle)^2/h_N\big|=O(J_N^2/\Delta)\\
    \implies\big|\langle j|\hat\sigma^z_N|j\rangle-\langle k|\hat\sigma^z_N|k\rangle\big|=O(h_N/\Delta).
\end{multline}
This is impossible for our choice of $h_N$ (\ref{eq:cond}).

{\bf Proof of induction hypothesis 2 for $H_{N+1}$.} Using first-order non-degenerate perturbation theory,
\begin{multline}
    |j\circ)=|j\circ\rangle+\sum_{k\neq j}\frac{\langle k\circ|H_{\rm per}|j\circ\rangle}{E_j-E_k}|k\circ\rangle+\sum_k\frac{\langle k\bullet|H_{\rm per}|j\circ\rangle}{E_j-E_k+2h_N}|k\bullet\rangle+O(J_N^2)\\
    =|j\circ\rangle+\frac{J_N(2+\langle j|\hat\sigma^z_N|j\rangle)}{2h_N}|j\bullet\rangle+\sum_{k\neq j}\frac{J_N\langle k|\hat\sigma^z_N|j\rangle}{E_j-E_k+2h_N}|k\bullet\rangle+O(J_N^2)
\end{multline}
so that
\begin{equation}
    (j\circ|\hat\sigma^z_{N+1}|j\circ)=J_N(2+\langle j|\hat\sigma^z_N|j\rangle)/h_N+O(J_N^2).
\end{equation}
Similarly,
\begin{equation}
    (j\bullet|\hat\sigma^z_{N+1}|j\bullet)=-J_N(2+\langle j|\hat\sigma^z_N|j\rangle)/h_N+O(J_N^2).
\end{equation}
Induction hypothesis 2 for $H_N$ implies that $\{(j\circ|\hat\sigma^z_{N+1}|j\circ),(j\bullet|\hat\sigma^z_{N+1}|j\bullet)\}_{j=1}^{2^N}$ is a set of pairwise distinct numbers.
\end{proof}

\section{Extensions of Theorem \ref{thm:generic}} \label{app:extend}

\subsection{Proof of Corollary \ref{cor:mbl}}

Following the proof of Theorem \ref{thm:generic} in Appendix \ref{app:generic}, it suffices to find a particular $\gamma\in\mathbb R^{3N-1}$ (not necessarily $\gamma\in[-1,1]^{\times(3N-1)}$) such that the spectrum of $H_h^{XXZ}(\gamma)$ has non-degenerate gaps. Let
\begin{equation}
    H'_N:=\Lambda H_N+\sum_{j=1}^{N-1}(\hat\sigma_j^x\hat\sigma_{j+1}^x+\hat\sigma_j^y\hat\sigma_{j+1}^y),
\end{equation}
where $H_N$ is given by Eq. (\ref{eq:HN}). Since the spectrum of $H_N$ has non-degenerate gaps (Lemma \ref{l:HN}), that of $H'_N$ also has non-degenerate gaps for sufficiently large $\Lambda>0$.

\subsection{Higher spatial dimensions}

To prove an analogue of Theorem \ref{thm:generic} on a higher-dimensional lattice $L$, it suffices to construct a particular Hamiltonian on $L$ with nearest-neighbor interactions such that its spectrum has non-degenerate gaps. To this end, we simply embed the spin chain model (\ref{eq:HN}) into $L$. Figure \ref{lattice} illustrates the embedding for $L$ being a two-dimensional square lattice.

\begin{figure}
\centering
\begin{tikzpicture}
\draw[help lines](-.2,-.2) grid (4.2,4.2);
\draw[ultra thick](0,0)--(0,4)--(1,4)--(1,0)--(2,0)--(2,4)--(3,4)--(3,0)--(4,0)--(4,4);
\end{tikzpicture}
\caption{Embedding a spin chain (thick line) into a two-dimensional square lattice (grid) such that adjacent sites in the spin chain remain adjacent in the square lattice.}
\label{lattice}
\end{figure}
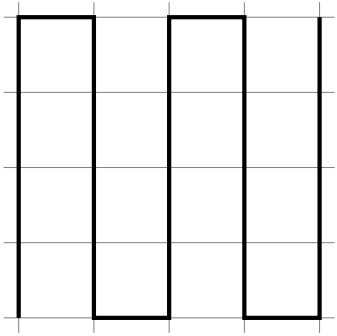

\subsection{Qudit systems} \label{app:qudit}

To extend Theorem \ref{thm:generic} to a qudit chain with $d\ge3$, it suffices to prove an analogue of Lemma \ref{l:HN}. Let $-1\le e_1<e_2<\cdots<e_d\le1$ be $d$ constants such that the spectrum of the diagonal matrix $\hat X:=\diag(e_1,e_2,\ldots,e_d)$ has non-degenerate gaps. Let
\begin{equation}
\hat Z:=\bigoplus_{j=1}^{d/2}\begin{pmatrix}
0 & 1 \\
1 & 0
\end{pmatrix},
\quad\hat Z:=\bigoplus_{j=1}^{(d-1)/2}
\begin{pmatrix}
0 & 1 \\
1 & 0
\end{pmatrix}
\oplus
\begin{pmatrix}
0
\end{pmatrix}
\end{equation}
if $d$ is even/odd, respectively, so that the expectation value of $\hat Z$ in any eigenstate of $\hat X$ is $0$. Let $\Lambda>1$ be a constant such that
\begin{equation}
    \Lambda>\max_{1\le j<k\le d/2}\frac{e_{2j}-e_{2j-1}+e_{2k}-e_{2k-1}}{|e_{2j}-e_{2j-1}-e_{2k}+e_{2k-1}|}.
\end{equation}

Consider a chain of $N$ qudits. Let $\hat X_j,\hat Z_j$ be the $\hat X,\hat Z$ operators for the spin at position $j$. It is trivial to construct a Hamiltonian $H_1$ acting only on the first spin such that
\begin{enumerate}
    \item The spectrum of $H_1$ has non-degenerate gaps.
    \item $\{\langle j|\hat Z_1|j\rangle\}$ is a set of pairwise distinct numbers, where $\{|j\rangle\}_{j=0}^{d-1}$ are eigenstates of $H_1$.
\end{enumerate}

\begin{corollary}
For any positive integer $N$, there exists $(h_1,J_1,h_2,J_2,\ldots,h_{N-1},J_{N-1})\in\mathbb R^{2(N-1)}$ such that the spectrum of 
\begin{equation}
    H_N=H_1+\sum_{j=1}^{N-1}\big(h_j\hat X_{j+1}+J_j(\Lambda+\hat Z_j)\hat Z_{j+1}\big)
\end{equation}
has non-degenerate gaps.
\end{corollary}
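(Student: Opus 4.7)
The plan is to mirror the induction proof of Lemma \ref{l:HN}, adapting both induction hypotheses and the perturbative analysis to the qudit setting. I would induct on $N$ with two hypotheses: (1) the spectrum of $H_N$ has non-degenerate gaps, and (2) the diagonal entries $\langle j|\hat Z_N|j\rangle$ over eigenstates $|j\rangle$ of $H_N$ are pairwise distinct. The base case is built directly into the construction of $H_1$. For the induction step I decompose $H_{N+1} = H_{\mathrm{unp}} + H_{\mathrm{per}}$ with $H_{\mathrm{unp}} := H_N + h_N \hat X_{N+1}$ and $H_{\mathrm{per}} := J_N(\Lambda + \hat Z_N)\hat Z_{N+1}$, choose $J_N, h_N$ hierarchically small in the spirit of (\ref{eq:cond}), and label the eigenstates of $H_{\mathrm{unp}}$ by $|j, e_k\rangle$ with energies $E_j + h_N e_k$.

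Non-degenerate perturbation theory in $H_{\mathrm{per}}$ is simplified by $\langle e_k|\hat Z|e_k\rangle = 0$, so all first-order shifts vanish. The dominant second-order contribution comes from the $\hat Z$-partner $|e_{k'(k)}\rangle$ of $|e_k\rangle$ within the same $2\times 2$ block of $\hat Z$, giving
\begin{equation}
    E_{j,k} = E_j + h_N e_k + \frac{J_N^2(\Lambda + \langle j|\hat Z_N|j\rangle)^2}{h_N(e_k - e_{k'(k)})} + O(J_N^2/\Delta),
\end{equation}
together with the first-order estimate $(j, e_k|\hat Z_{N+1}|j, e_k) = 2J_N(\Lambda + \langle j|\hat Z_N|j\rangle)/[h_N(e_k - e_{k'(k)})] + O(J_N^2)$. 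For hypothesis 1 on $H_{N+1}$, a candidate degeneracy $E_{j_1,k_1} + E_{j_2,k_2} = E_{j_3,k_3} + E_{j_4,k_4}$ is reduced at $O(1)$ by non-degenerate gaps of $H_N$ and at $O(h_N)$ by non-degenerate gaps of $\hat X$ to the case $\{j_1,j_2\} = \{j_3,j_4\}$ and $\{k_1,k_2\} = \{k_3,k_4\}$; the only remaining non-trivial swap reduces to $(\eta_{j_1} - \eta_{j_2})(\xi_{k_1} - \xi_{k_2}) \neq 0$ with $\eta_j := (\Lambda + \langle j|\hat Z_N|j\rangle)^2$ and $\xi_k := 1/(e_k - e_{k'(k)})$, which follows from hypothesis 2 for $H_N$ together with $\Lambda > 1$ (injectivity of $\eta_j$) and from non-degenerate gaps of $\hat X$ together with the pair structure of $\hat Z$ (injectivity of $\xi_k$).

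The hardest step is establishing hypothesis 2 on $H_{N+1}$, i.e., injectivity of $(\Lambda + \langle j|\hat Z_N|j\rangle)/(e_k - e_{k'(k)})$ in $(j, k)$. The delicate sub-case is $j \neq j'$ together with $k, k'$ lying in distinct $\hat Z$-pairs and with partner gaps of the same sign; this reduces to the inequality $\Lambda(a_n - a_m) \neq b a_m - a a_n$ for $a_m := e_{2m} - e_{2m-1}$ and $a, b \in [-1, 1]$, which is exactly what the explicit lower bound on $\Lambda$ in the statement enforces via $|\Lambda(a_n - a_m)| > a_m + a_n \geq |b a_m - a a_n|$. The main residual obstacle is the edge case of $d$ odd, where $\hat Z$ has a one-dimensional null block and the states $|j, e_d\rangle$ decouple from $H_{\mathrm{per}}$ entirely; these levels remain at $E_j + h_N e_d$ exactly, which poses no problem for hypothesis 1 on $H_{N+1}$ itself but renders hypothesis 2 vacuous within the decoupled sector and hence demands a separate, more delicate tracking of that sector through subsequent induction steps to confirm that no structural gap degeneracies emerge from it.
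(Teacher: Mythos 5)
Your strategy is exactly the one the paper intends --- the paper itself offers no details beyond ``almost the same way'' --- and for even $d$ your account is correct and supplies precisely the details that need checking. In particular: the first-order shifts vanish because $\langle e_k|\hat Z|e_k\rangle=0$; the dominant second-order shift is $J_N^2(\Lambda+\langle j|\hat Z_N|j\rangle)^2/[h_N(e_k-e_{k'(k)})]$ with an $O(J_N^2/\Delta)$ remainder; and the only genuinely new sub-case relative to Lemma \ref{l:HN} is $j\neq j'$, $k\neq k'$ with same-sign partner gaps, where the explicit lower bound on $\Lambda$ enters via $|\Lambda(a_n-a_m)|>a_m+a_n\ge|ba_m-aa_n|$. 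Identifying that last point as the purpose of the otherwise unmotivated condition on $\Lambda$ is the substantive content hiding behind the word ``almost''.

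Your worry about odd $d$ is not a loose end to be ``tracked through subsequent induction steps''; it is fatal to the construction as written, and you should upgrade it to a counterexample. For odd $d$ the last computational basis state satisfies $\hat Z_p|d\rangle_p=0$ and $\hat X_p|d\rangle_p=e_d|d\rangle_p$, so for any interior spin $2\le p\le N-1$ the subspace in which spin $p$ is in state $|d\rangle_p$ is invariant under $H_N$: the term $J_{p-1}(\Lambda+\hat Z_{p-1})\hat Z_p$ annihilates it, $h_{p-1}\hat X_p$ acts as the constant $e_d$, and $J_p(\Lambda+\hat Z_p)\hat Z_{p+1}$ reduces to $J_p\Lambda\hat Z_{p+1}$. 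On that subspace $H_N$ is therefore a constant plus a sum of two commuting Hamiltonians acting on the spins to the left and to the right of $p$, so its eigenvalues there have the additive form $\lambda_a+\mu_b$. Whenever both factors have at least two distinct levels this yields four distinct eigenstates with $(\lambda_1+\mu_1)-(\lambda_1+\mu_2)=(\lambda_2+\mu_1)-(\lambda_2+\mu_2)$, an exact gap degeneracy (and if a factor is a multiple of the identity the spectrum is outright degenerate). This holds for every choice of $(h_j,J_j)$, so for odd $d\ge3$ and $N\ge3$ the corollary fails with the stated $\hat Z$. The repair is to choose, for odd $d$, a zero-diagonal $\hat Z$ with no computational-basis null vector (e.g., the adjacency matrix of a $d$-cycle), at the cost of two partner terms per level in your second-order bookkeeping.
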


This corollary can be proved in almost the same way as Theorem \ref{thm:generic}.

\section{Proof of Theorem \ref{thm:tight}}

\begin{lemma} \label{l:ent}
Let $\{|j\rangle\}_{j=0}^{d-1}$ be an orthonormal basis of $\mathbb C^d$. For a Haar-random state $|\Psi'\rangle$,
\begin{equation}
    -\E_{\Psi'}\sum_{j=0}^{d-1}p_j\ln p_j=\sum_{k=2}^d\frac1k,\quad p_j:=\big|\langle j|\Psi'\rangle\big|^2.
\end{equation}
\end{lemma}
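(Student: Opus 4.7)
The plan is to exploit unitary invariance of the Haar measure to collapse the sum into a single one-dimensional expectation, and then evaluate that expectation by differentiating the moment $M_\alpha$ from the first lemma of Appendix A with respect to its order at $\alpha=1$.

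First, by unitary invariance of the Haar measure (equivalently, by the permutation symmetry of the joint distribution of $(p_0,\ldots,p_{d-1})$, which is uniform on the simplex), every $p_j$ has the same marginal distribution as $p_0=|\langle 0|\Psi'\rangle|^2$. Consequently,
\begin{equation}
-\E_{\Psi'}\sum_{j=0}^{d-1}p_j\ln p_j \;=\; -d\,\E_{\Psi'}\bigl[p_0\ln p_0\bigr].
\end{equation}

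Second, I would differentiate the moment $M_\alpha=\E_{\Psi'}[p_0^\alpha]$ with respect to $\alpha$. Because $p_0\in[0,1]$, the function $\alpha\mapsto M_\alpha$ is real-analytic on a neighborhood of $\alpha=1$, and derivative and expectation may be interchanged, yielding $M_\alpha'|_{\alpha=1}=\E_{\Psi'}[p_0\ln p_0]$.

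Third, I would invoke the explicit product formula $M_\alpha=\prod_{j=1}^{d-1}\tfrac{j}{j+\alpha}$ from the first lemma of Appendix A. Taking the logarithmic derivative,
\begin{equation}
\frac{M_\alpha'}{M_\alpha}=-\sum_{j=1}^{d-1}\frac{1}{j+\alpha}.
\end{equation}
At $\alpha=1$ one has $M_1=1/d$ and $\sum_{j=1}^{d-1}\frac{1}{j+1}=\sum_{k=2}^d\frac{1}{k}$, so $-\E_{\Psi'}[p_0\ln p_0]=\frac{1}{d}\sum_{k=2}^d\frac{1}{k}$. Multiplying by $d$ gives the claimed identity.

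The only real subtlety is justifying the interchange of $\partial_\alpha$ with $\E$, but this is standard since $p_0^\alpha|\ln p_0|$ is dominated by an integrable function uniformly in $\alpha$ near $1$ (using $p_0\le 1$). An alternative that avoids the interchange entirely is to compute $-\int_0^1 x\ln x\,f(x)\,\mathrm dx$ directly from the density $f(x)=(d-1)(1-x)^{d-2}$ of Eq.~(\ref{eq:fx}), recognizing this as $\partial_\alpha B(\alpha+1,d-1)|_{\alpha=1}$ and then applying the digamma identity $\psi(d+1)-\psi(2)=\sum_{k=2}^d\frac{1}{k}$; this delivers exactly the same answer.
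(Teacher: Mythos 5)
Your proof is correct. The first step---using the permutation symmetry of $(p_0,\ldots,p_{d-1})$ under the Haar measure to reduce the sum to $-d\,\E_{\Psi'}[p_0\ln p_0]$---is exactly what the paper does. Where you differ is in evaluating that single expectation: the paper simply computes $d\int_0^1 x f(x)\ln x\,\mathrm dx$ directly from the density $f(x)=(d-1)(1-x)^{d-2}$ of Eq.~(\ref{eq:fx}), whereas you obtain it as $\frac{\mathrm d}{\mathrm d\alpha}M_\alpha\big|_{\alpha=1}$ and read the answer off the logarithmic derivative of the product formula $M_\alpha=\prod_{j=1}^{d-1}\frac{j}{j+\alpha}$. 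Your route reuses the first lemma of Appendix A and makes the appearance of the partial harmonic sum $\sum_{k=2}^d\frac1k$ structurally transparent (it is literally $-M_1'/M_1$), at the modest cost of having to justify the interchange of $\partial_\alpha$ and $\E$, which you do correctly by domination. The paper's route is more elementary and self-contained but requires recognizing the integral $\int_0^1 x(1-x)^{d-2}\ln x\,\mathrm dx$ as yielding a harmonic sum, a small computation it leaves implicit. Both are sound; the content is the same identity viewed from two sides of the Beta function.
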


\begin{proof}
\begin{equation}
    \E_{\Psi'}\sum_{j=0}^{d-1}p_j\ln p_j=d\E_{\Psi'}p_0\ln p_0=d\int_0^1xf(x)\ln x\,\mathrm dx=-\sum_{k=2}^d\frac1k,
\end{equation}
where $f(x)$ is given by Eq. (\ref{eq:fx}).
\end{proof}

Let
\begin{equation}
\sigma_A:=\lim_{\tau\to+\infty}\frac1\tau\int_0^\tau\rho_A(t)\,\mathrm dt
\end{equation}
be the infinite time average and $\sigma_j:=\tr_{A\setminus\{j\}}\sigma_A$ be the reduced density matrix of the spin at position $j\in A$. Using the concavity and subadditivity \cite{AL70} of the von Neumann entropy,
\begin{equation} \label{eq:conca}
    \lim_{\tau\to+\infty}\frac1\tau\int_0^\tau S\big(\rho_A(t)\big)\,\mathrm dt\le S(\sigma_A)\le\sum_{j\in A}S(\sigma_j).
\end{equation}
Let $x\doteq y$ denote that $x-y$ is infinitesimal. Recall that $H^{\rm mbl}\doteq H^{\rm loc}$, which is diagonal in the computational basis and whose spectrum is non-degenerate. For a Haar-random product state $|\Psi\rangle=\bigotimes_{j=1}^N|\Psi_j\rangle$, it is not difficult to see that
\begin{equation} 
    \sigma_j\doteq\diag(p_{j,0},p_{j,1},\ldots,p_{j,d-1}),\quad p_{j,k}:=\big|{}_j\langle k|\Psi_j\rangle\big|^2,
\end{equation}
where $\{|k\rangle_j\}_{k=0}^{d-1}$ is the computational basis for the spin at position $j$. Hence,
\begin{equation} \label{eq:subadd}
    \E_\Psi\sum_{j\in A}S(\sigma_j)\doteq-\sum_{j \in A}\E_{\Psi_j}\sum_{k=0}^{d-1}p_{j,k}\ln p_{j,k}.
\end{equation}
Theorem \ref{thm:tight} follows by combining Lemma \ref{l:ent} and (\ref{eq:conca}), (\ref{eq:subadd}).

\bibliographystyle{abbrv}
\bibliography{main}

\end{document}